\let\mathbb=\varmathbb
\definecolor{halfgray}{gray}{0.55}
\definecolor{OliveGreen}{rgb}{0,.35,0}
\definecolor{webbrown}{rgb}{.6,0,0}
\definecolor{BrightViolet}{rgb}{0.5,0.2,0.8}
\definecolor{Maroon}{cmyk}{0, 0.87, 0.68, 0.32}
\definecolor{RoyalBlue}{cmyk}{1, 0.50, 0, 0.25}
\definecolor{Black}{cmyk}{0, 0, 0, 0}
\newcommand{\C}{\mathbb{C}}
\newcommand{\R}{\mathbb{R}}
\DeclareMathOperator{\ex}{\mathbb{E}}
\newcommand{\grad}{\nabla}
\DeclareMathOperator{\hess}{Hess}
\DeclareMathOperator*{\argmax}{\arg\max}
\DeclareMathOperator{\supp}{supp}
\newcommand{\from}{\colon}
\newcommand{\mgeq}{\succcurlyeq}
\newcommand{\pd}{\partial}
\newcommand{\varsimplex}{\mathcal{D}}
\newcommand{\dis}{\displaystyle}
\newcommand{\txs}{\textstyle}
\newcommand{\insum}{\sum\nolimits}
\newtheorem{theorem}{Theorem}
\newtheorem*{corollary*}{Corollary}
\newtheorem{proposition}{Proposition}
\theoremstyle{definition}
\newtheorem*{definition*}{Definition}
\theoremstyle{remark}
\newtheorem{remark}{Remark}
\newtheorem*{remark*}{Remark}
\newcommand{\bp}{\mathbf{p}}
\newcommand{\bv}{\mathbf{v}}
\newcommand{\bw}{\mathbf{w}}
\newcommand{\bx}{\mathbf{x}}
\newcommand{\bz}{\mathbf{z}}
\newcommand{\eq}{p^{\ast}}
\newcommand{\eqvec}{\bp^{\ast}}
\newcommand{\eqset}{\varsimplex^{\ast}}
\newcommand{\sinr}{\mathsf{sinr}}
\newcommand{\breg}{D_{h}}
\newcommand{\gibbs}{G}
\newcommand{\gibbsvec}{\mathbf{\gibbs}}
\newcommand{\pay}{u}
\newcommand{\payv}{v}
\newcommand{\payvec}{\mathbf{\payv}}
\newcommand{\score}{y}
\newcommand{\scorevec}{\mathbf{\score}}
\newcommand{\noisedev}{\sigma}
\newcommand{\noisevar}{\noisedev^{2}}
\newcommand{\LP}{\mathrm{LP}}
\newcommand{\VP}{\mathrm{VP}}
\newcommand{\PU}{\mathrm{PU}}
\newcommand{\eql}{\mathsf{EQL}}
\newcommand{\VI}{\Psi}
\newcommand{\step}{\gamma}
\newcommand{\abs}[1]{\left\lvert #1 \right\rvert}
\newcommand{\norm}[1]{\left\| #1 \right\|}
\newcommand{\braket}[2]{\left\langle #1 \middle\vert  #2 \right\rangle}
\newcommand{\set}{\mathcal{S}}
\newcommand{\subs}{\mathcal{S}}
\newcommand{\sub}{s}
\newcommand{\rate}{R}
\newcommand{\cost}{C}
\newcommand{\price}{\pi}
\newcommand{\pot}{V}
\newcommand{\strat}{\mathcal{X}}
\newcommand{\game}{\mathfrak{G}}
\newcommand{\play}{\mathcal{K}}
\newcommand{\intf}{I^{\mathrm{max}}}
\newcommand{\PM}[1]{\todo[inline,color=OliveGreen!40,author=\textbf{\small\scshape PM says}]{\footnotesize #1\\}}
\newcommand{\SD}[1]{\todo[inline,color=RoyalBlue!40,author=\textbf{\small\scshape SD says}]{\footnotesize #1\\}}
\begin{document}


\title{Cost-Efficient Throughput Maximization in Multi-Carrier Cognitive Radio Systems}

\author{%
Salvatore D'Oro,%
~\IEEEmembership{Student Member,~IEEE},
Panayotis Mertikopoulos,%
~\IEEEmembership{Member,~IEEE},
Aris L. Moustakas,%
~\IEEEmembership{Senior Member,~IEEE},
Sergio Palazzo,%
~\IEEEmembership{Senior Member,~IEEE}
\thanks{%
This research was supported in part by the European Commission in the framework of the FP7 Network of Excellence in Wireless COMmunications NEWCOM\# (contract no. 318306),
and by the French National Research Agency project
NETLEARN (ANR\textendash 13\textendash INFR\textendash 004).
Part of this work was presented at WiOpt 2014: the 12th International Symposium on Modelling and Optimization in Mobile, Ad Hoc, and Wireless Networks, Hammamet, Tunisia, May~2014.}
\thanks{%
S. D'Oro and S. Palazzo are with the CNIT Research Unit, University of Catania, Italy;
P.~Mertikopoulos is with the French National Center for Scientific Research (CNRS) and the Laboratoire d'Informatique de Grenoble, Grenoble, France;
A. L. Moustakas is with the Department of Physics, University of Athens, the Institute of Accelerating Systems and Applications (IASA), Athens, Greece, and the École Supérieure d'Électricité (Supélec), Gif-sur-Yvette, France, supported by the Digiteo Senior Chair "ASAPGONE".}
}


\maketitle

\newacro{5G}{5th generation}
\newacro{ICT}{information and communications technology}
\newacro{CR}{cognitive radio}
\newacro{CCI}{co-channel interference}
\newacro{IT}{interference temperature}
\newacro{MC}{multi-carrier}
\newacro{MAC}{multiple access channel}
\newacro{MIMO}{multiple-input and multiple-output}
\newacro{MUI}{multi-user interference-plus-noise}
\newacro{NE}{Nash equilibrium}
\newacroplural{NE}[NE]{Nash equilibria}
\newacro{DSM}{dynamic spectrum management}
\newacro{PMAC}{parallel multiple access channel}
\newacro{CSI}{channel state information}
\newacro{CSIT}{channel state information at the transmitter}
\newacro{CDMA}{code division multiple access}
\newacro{DSL}{digital subscriber line}
\newacro{SIC}{successive interference cancellation}
\newacro{SUD}{single user decoding}
\newacro{SINR}{signal-to-interference-and-noise ratio}
\newacro{KKT}{Ka\-rush--Kuhn--Tuc\-ker}
\newacro{IWF}{iterative water-filling}
\newacro{SWF}{simultaneous water-filling}
\newacro{PU}{primary user}
\newacro{SU}{secondary user}
\newacro{iid}[i.i.d.]{independent and identically distributed}
\newacro{XL}{exponential learning}
\newacro{FCC}{Federal Communications Commission}
\newacro{NTIA}{National Telecommunications and Information Administration}
\newacro{GAO}{General Accounting Office}
\newacro{QoS}{quality of service}
\newacro{OFDM}{orthogonal frequency division multiplexing}\acused{OFDM}
\newacro{ILP}[I-LP]{interference linear pricing}
\newacro{IPLP}[I-PLP]{interference proportional linear pricing}
\newacro{IPEP}[I-PEP]{interference proportional exponential pricing}
\newacro{LP}{linear pricing}
\newacro{VP}{violation pricing}
\newacro{NLP}[EP]{exponential pricing}
\newacro{EQL}{equilibration level}
\newacro{KL}{Kullback\textendash Leibler}
\newacro{APT}{asymptotic pseudotrajectory}
\newacro{STC}{search-then-converge}
\newacro{TDD}{time-division duplexing}

\begin{abstract}
Cognitive radio (CR) systems allow opportunistic, \acp{SU} to access portions of the spectrum that are unused by the network's licensed \acp{PU}, provided that the induced interference does not compromise the \acs{PU}' performance guarantees.
To account for interference constraints of this type, we consider a flexible spectrum access pricing scheme that charges \acp{SU} based on the interference that they cause to the system's \acp{PU} (individually, globally, or both), and we examine how \acp{SU} can maximize their achievable transmission rate in this setting.
We show that the resulting non-cooperative game admits a unique Nash equilibrium under very mild assumptions on the pricing mechanism employed by the network operator, and under both static and ergodic (fast-fading) channel conditions.
In addition, we derive a dynamic power allocation policy that converges to equilibrium within a few iterations (even for large numbers of users), and which relies only on local \ac{SINR} measurements;
importantly, the proposed algorithm retains its convergence properties even in the ergodic channel regime, despite the inherent stochasticity thereof.
Our theoretical analysis is complemented by extensive numerical simulations which illustrate the performance and scalability properties of the proposed pricing scheme under realistic network conditions.
\end{abstract}

\begin{IEEEkeywords}
Cognitive radio;
multi-carrier systems;
interference temperature;
pricing;
exponential learning.
\end{IEEEkeywords}

\acresetall

\section{Introduction}
\label{sec:intro}

Greatly raising the bar from previous generation upgrades, current design specifications for \ac{5G} wireless systems target a massive increase in network capacity, fiber-like connection speeds (well into the Gb/s range), and an immersive overall user experience with zero effective latency and response times \cite{ABC+14,Qualcomm}.
As such, the \acs{ICT} industry is faced with a formidable challenge:
these ambitious design goals require the deployment of new wireless interfaces at an unprecedented scale, but the necessary overhaul is limited by the inherent constraints of upgrading an entrenched (and often ageing) wireless infrastructure.

Chief among these concerns is the projected spectrum crunch:
if not properly managed, the existing radio spectrum will not be able to accommodate the soaring demand for wireless broadband and the ever-growing volume of data traffic \cite{FCC02}.
To make matters worse, studies by the US \ac{FCC} and the \ac{NTIA} have shown that this vital commodity is effectively squandered through underutilization and inefficient use:
for instance, only $15\%$ to $85\%$ of the licensed radio spectrum is used on average, leaving ample spectral voids that could be exploited via efficient spectrum management techniques \cite{FCC02,GAO04}.
Accordingly, in this often unregulated context, the emerging paradigm of \ac{CR} has attracted considerable interest as a promising way out of the spectrum gridlock \cite{MM99,ZS07,Hay05,GJMS09}.

At its most basic level, \acl{CR} comprises a two-level hierarchy between wireless users induced by spectrum licencing:
the network's licensed, \acp{PU} have purchased spectrum rights from the network operator (often in the form of contractual \ac{QoS} guarantees), but they allow unlicenced \acp{SU} to access the spectrum provided that the induced \ac{CCI} remains below a certain threshold \cite{MM99,Hay05}.
Put differently, by sensing the wireless medium, the network's cognitive \acp{SU} essentially free-ride on the \acp{PU}' licensed spectrum and they try to communicate under the constraints imposed by the \acp{PU} (though, of course, without any \ac{QoS} guarantees).
Thus, by opening up the unused part of the spectrum to opportunistic user access, overall spectrum utilization is increased without needing to deploy more (and more expensive) wireless interfaces \cite{ZS07,SSSBN08}.

Of course, given the non-cooperative nature of this opportunistic framework, throughput optimization in \ac{CR} environments calls for flexible and decentralized optimization policies with minimal information exchange between \acp{SU}, \acp{PU}, and access points/base stations.
In particular, a major challenge involves safeguarding the performance guarantees that the network's licensed \aclp{PU} have already paid for:
if \aclp{SU} are allowed to transmit without some power/interference control mechanism in place, then the \aclp{PU}' \ac{QoS} requirements may be violated, thus invalidating the fundamental operational premise of \ac{CR} systems.
To that end, the authors of \cite{ABSA02,SMG02,SBP06,DMMP14} investigated the role of pricing as an effective mechanism to control interference and they provided an energy/cost-efficient formulation of the problem where users seek to maximize their transmission rate while keeping their transmit power in check.
To reach a stable equilibrium state in this setting, several distributed approaches have been proposed, based chiefly on reaction functions \cite{ABSA02}, Gauss-Seidel and Jacobi update algorithms \cite{SBP06}, or learning methods \cite{MBML12,DMMP14};
however, these works do not distinguish between licenced and unlicensed users, so their results do not immediately apply to \ac{CR} networks.

In \ac{CR} systems, \ac{PU} requirements are often treated as \ac{IT} \cite{FCC03} constraints that are coupled across the network's \acp{SU}, and the theoretical analysis of the resulting system
aims to characterize the network's optimum/unilaterally stable equilibrium states and to provide the means to converge to such states \cite{NH08,WWL2010game,PSPF10,YSSP13,XZ14}.
These constraints are then enforced indirectly via exogenous pricing mechanisms that charge \acp{SU} based on the aggregate interference that they cause to the network's \acp{PU} (and, of course, \acp{PU} are reimbursed commensurately).
In this context, the authors of \cite{NH08} introduced a spectrum-trading mechanism based on a market-equilibrium approach \cite{MWG95} and they provided an algorithm allowing \acp{SU} to estimate spectrum prices and adjust their spectrum demands accordingly.
More recently, to account for the \acp{PU}' maximum interference tolerance, the authors of \cite{PSPF10,YSSP13} introduced a game-theoretic formulation of \ac{CR} interference channels where \acp{SU} are charged proportionally to the aggregate interference caused;
then, using variational inequality methodologies, they derived sufficient low-interference conditions under which the resulting game admits a unique Nash equilibrium and they proposed a best-response algorithm that converges to this equilibrium state.
The case of inexact system information was considered in \cite{XZ14} where the authors formulated the problem as a (deterministic) robust optimization program which can be solved by Lagrangian dual decomposition methods.
A game-theoretic account of the impact of \ac{IT} constraints on system performance is also studied in \cite{WCPW07} where the authors derive cost-aware optimal power allocation policies by relaxing the problem's hard \ac{IT} constraints and incorporating an exponential cost in the \acp{SU}' utility functions;
in this context, the resulting power allocation game admits a unique equilibrium which is also Pareto efficient in the low-interference regime.
Finally, by exploiting the innate hierarchy between primary and secondary users, the authors of \cite{WJH14} provided a Stackelberg game formulation where the system's \ac{PU} acts as the leader and seeks to maximize the revenue generated by discriminatory spectrum access pricing mechanisms imposed to \acp{SU} (the game's followers).
\PM{Salvo, please input your own literature review here w.r.t. \cite{WJH14,WCPW07}. Sergio and Aris, your contributions are more than welcome, too!}
\SD{Done! However, I have not access to \cite{WJH14}, so my review is based on what I remember was studied in the paper. I will revise it as soon as I get access to the paper. Leave this comment as a reminder, please.}
\PM{I killed the last phrase on backward induction and noncavity to save space \textendash\ we can always add it back if needed.}

That being said,
the above works focus almost exclusively on wireless systems with static channel conditions where the benefits of interference control mechanisms are relatively easy to evaluate;
by contrast, very little is known in the case where the channels vary with time (e.g., due to user mobility).
In the presence of (fast) fading, channel gains are typically assumed to follow a stationary ergodic process, so the users' throughput and induced interference depend crucially on the channel statistics.
In this stochastic framework, the authors of \cite{MBML12} studied the problem of ergodic rate maximization in \ac{MC} systems and derived an efficient power allocation algorithm that allows users to attain the system's capacity;
however, no distinction was made between licensed and unlicensed users, so the results of \cite{MBML12} do not readily translate to a \ac{CR} setting.
More recently, \cite{MB14} provided an efficient online learning algorithm for unilateral rate optimization in dynamic \acl{MC} \ac{MIMO} \acl{CR} systems, but, again, without taking into account any \ac{IT} constraints imposed by the network's \aclp{PU}.

In this paper, we consider the problem of cost-efficient throughput maximization in \acl{MC} \acl{CR} networks where \acp{SU} are charged based on the interference that they cause to the network's \acp{PU} (either on an aggregate or a per-user basis).
Our system model is presented in Section \ref{sec:model} where we consider a general game-theoretic formulation that is flexible enough to account for both aggregate (flat-rate), temperature-based, and per-user pricing schemes.
In the case of static channels (Section \ref{sec:learning}), we show that the resulting game admits a unique Nash equilibrium almost surely, provided that the \acp{SU}' pricing schemes satisfy some fairly mild requirements (for instance, that a user's transmission cost increases with his radiated power).
On the other hand, in the case of fast-fading channels (which we study in Section \ref{sec:fading}), we show that the game under study admits a unique Nash equilibrium always, without any further caveats.

Moreover, extending the exponential learning techniques of \cite{MBML12}, we also derive a dynamic power allocation policy that converges to Nash equilibrium in a few iterations, even for large numbers of users and/or subcarriers per user.
In particular, the proposed algorithm has the following desirable attributes:
\begin{enumerate}
\item
\emph{Distributedness:}
user updates are based on local information and signal measurements.
\item
\emph{Statelessness:}
users do not need to know the state (or topology) of the system.
\item
\emph{Unilateral reinforcement:}
each user tends to increase his own utility;
put differently, the algorithm is aligned with each user's individual objective.
\item
\emph{Flexibility:}
the users' learning algorithm can be deployed in both static and ergodic (fast-fading) channel environments.
\end{enumerate}
As such, even though the static and ergodic channel regimes are fundamentally different, the network's users do not have to switch their update structure in order to converge to equilibrium (in the static or fast-fading regime, respectively).

Finally, our analysis is supplemented in Section \ref{sec:numerics} by extensive numerical simulations where we illustrate the throughput and power gains of the proposed approach under realistic conditions.

\section{System Model}
\label{sec:model}

Consider a set $\play = \{1,\dotsc,K\}$ of (unlicensed) \acfp{SU} that seek to connect to a common receiver over a set $\subs = \{1,\dots,S\}$ of non-interfering subcarriers (typically in the frequency domain if an \ac{OFDM} scheme is employed).
Focusing on the uplink case, the aggregate received signal $y_{\sub}$ over the $\sub$-th subcarrier will then be:
\begin{equation}
\label{eq:signal}
y_{\sub}
	= \insum_{k\in\play} h_{k\sub} x_{k\sub} + z_{\sub},
\end{equation}
where
\begin{enumerate}
\item
$x_{k\sub}\in\C$ denotes the transmitted signal of user $k\in\play$ over the $\sub$-th subcarrier.
\item
$h_{k\sub}\in\C$ is the corresponding transfer coefficient.
\item
$z_{\sub}\in\C$ denotes the aggregate interference-plus-noise received from all sources not in $\play$ (including the aggregate \ac{PU} transmission on subcarrier $\sub$ plus ambient and other peripheral interference effects);
throughout this paper (and by performing a suitable change of basis if necessary), we will model $z_{\sub}$ as a Gaussian variable $z_{\sub} \sim \mathcal{CN}(0,\noisevar_{\sub})$ for some positive $\sigma_{\sub}>0$.
\end{enumerate}
In this context, the average transmit power of user $k$ on subcarrier $\sub$ will be
\begin{equation}
p_{k\sub}
	= \ex\big[\vert x_{k\sub}\vert^{2}\big],
\end{equation}
where the expectation is taken over the (Gaussian) codebook of user $k$;
furthermore, each user's \emph{total} transmit power $p_{k} = \ex[\bx_{k}^{\dag} \bx_{k} ] = \sum_{\sub} p_{k\sub}$ will have to satisfy the power constraint
\begin{equation}
\label{eq:constraint-p}
p_{k}
	= \insum_{\sub\in\subs} p_{k\sub}
	\leq P_{k},
\end{equation}
where $P_{k}>0$ denotes the maximum transmit power of user $k\in\play$.
In this way, the set of admissible power allocation vectors for user $k$ is the $S$-dimensional polytope
\begin{equation}
\txs
\strat_{k}
	= \left\{\bp_{k}\in\R^{\subs}: p_{k\sub}\geq0 \text{ and } \insum_{\sub\in\subs} p_{k\sub} \leq P_{k} \right\},
\end{equation}
and the system's \emph{state space} (i.e., the space of all admissible power allocation profiles $\bp = (\bp_{1},\dotsc,\bp_{K})$) will be the product $\strat = \prod_{k}\strat_{k}$.

In this \acf{MC} framework, each user's achievable transmission rate depends on his individual \ac{SINR}
\begin{equation}
\label{eq:sinr}
\sinr_{k\sub}(\bp)
	= \frac{g_{k\sub} p_{k\sub}}{\noisevar_{\sub} + \sum_{\ell\neq k} g_{\ell\sub} p_{\ell\sub}},
\end{equation}
where $g_{k\sub} = \vert h_{k\sub} \vert^{2}$ denotes the channel gain coefficient for user $k$ over the $\sub$-th subcarrier.
Thus, in the \ac{SUD} regime (where interference by all other users is treated as additive noise), the maximum information transmission rate for user $k$ (achievable with random Gaussian codes) will be:
\begin{equation}
\label{eq:rate}
\rate_{k}(\bp)
	= \insum_{\sub\in\subs} \log\big( 1 + \sinr_{k\sub}(\bp) \big)
	= \insum_{\sub\in\subs} \left[
	\log\left( \noisevar_{\sub} + w_{\sub}(\bp) \right)
	- \log\left( \noisevar_{\sub} + \insum_{\ell\neq k} g_{\ell\sub} p_{\ell\sub}\right)
	\right]
\end{equation}
where
\begin{equation}
\label{eq:MUI}
w_{\sub}(\bp)
	= \insum_{k} g_{k\sub} p_{k\sub},
	\quad
	\sub=1,\dotsc,S,
\end{equation}
denotes the \emph{aggregate} \ac{SU} interference level per subcarrier (for convenience we will also write $\bw = (w_{1},\dotsc,w_{S})$ for the \acp{SU}' aggregate interference profile over all subcarriers $\sub\in\subs$).

In the absence of other considerations, the unilateral objective of each \ac{SU} would be the maximization of his individual transmission rate $\rate_{k}(\bp)$ subject to the total power constraint \eqref{eq:constraint-p}.
In our \ac{CR} setting however, the network operator needs to ensure that the system's \acp{PU} meet the \ac{QoS} guarantees that they have already paid for \textendash\ typically in the form of minimum rate requirements or maximum interference tolerance per subcarrier.
Thus, to achieve this, we will consider a general spectrum access pricing scheme whereby \acp{SU} are charged according to the individual and aggregate interference that they cause to the network's \acp{PU}.

Formally, this can be captured by the general cost model:
\begin{equation}
\label{eq:cost}
\cost_{k}(\bp)
	= \price_{0}(\bw(\bp))
	+ \price_{k}(\bp_{k}),
\end{equation}
where:
\begin{enumerate}
\item
$\price_{0}\from\R_{+}^{\subs}\to\R_{+}$ is a \emph{flat spectrum access price} that is calculated in terms of the aggregate \ac{SU} interference level $w_{\sub}$ per subcarrier $\sub\in\subs$.
\item
$\price_{k}\from\strat_{k} \to\R_{+}$ is a \emph{user-specific price} which is charged to user $k\in\play$ based on his \emph{individual} radiated power profile $\bp_{k}\in\strat_{k}$.
\end{enumerate}
In tune with standard economic considerations on diminishing returns \cite{MWG95}, the only assumptions that we will make for the price functions $\price_{0}$ and $\price_{k}$ are that:
\begin{enumerate}
[({A}1)\quad]
\item
Every price function $\price$ is non-decreasing in each of its arguments.
\item
Every price function $\price$ is Lipschitz continuous and convex.
\end{enumerate}
In particular, the convexity assumption (A2) acts as an interference control mechanism for the system:
by charging \acp{SU} higher spectrum access prices for the same increase in interference when the network operates in a high-interference state, \acp{SU} are implicitly encouraged to transmit at lower powers, thus creating less \acf{CCI} to the network's \acp{SU}.
In this way, the pricing scheme \eqref{eq:cost} is flexible enough to account for very diverse pricing paradigms:
if $\price_{0}\equiv0$, the network's \acp{SU} are charged on an equitable user-by-user basis, based only on the individual interference that each \emph{individual} user induces to the network's \acp{PU};%
\footnote{Likewise, $\price_{k}$ could also account for the actual cost incurred by the user to recharge the battery of his wireless device as in \cite{DMMP14}.}
otherwise, if $\price_{k}\equiv0$,
the pricing model \eqref{eq:cost} allows the network operator to reimburse infractions to the \acp{PU}' contractual \ac{QoS} guarantees by imposing an aggregate ``sanction'' to the network's \acp{SU} (who were responsible for causing the violation in the first place).

The specifics of the pricing functions $\price_{0}$ and $\price_{k}$ are negotiated between network users and operators based on their needs and means, so they can vary widely depending on the context \textendash\ see e.g. \cite{ABSA02,WCPW07,DMMP14,PSPF10,WJH14}.
For concreteness, we provide below some typical examples of pricing models which we explore further in Section \ref{sec:numerics}:%
\footnote{For simplicity, we focus on the flat-rate case;
the corresponding user-specific price functions $\price_{k}$ are defined similarly.}
\begin{enumerate}
[\itshape Model~1.\;]
\item
Let $\intf_{\sub}$ denote the \acp{PU}' interference tolerance on subcarrier $\sub$.
Then, in the spirit of \cite{PSPF10}, we define the \ac{LP} flat-rate model as:
\begin{equation}
\tag{LP}
\label{eq:LP}
\price_{0}^{\LP}(\bw)
	= \lambda_{0} \insum_{s\in\subs} w_{\sub}/\intf_{\sub},
\end{equation}
where the pricing parameter $\lambda_{0}$ represents the price paid by the network's \acp{SU} when saturating the \acp{PU}' interference tolerance.
In words, \acp{SU} are charged a flat-rate which is proportional to the degree of saturation of the \acp{PU}' interference tolerance level, so the model \eqref{eq:LP} treats the \acp{PU}' requirements as a soft constraint.

\item
With notation as above, the \ac{VP} flat-rate model is defined as:
\begin{equation}
\tag{VP}
\label{eq:VP}
\price_{0}^{\VP}(\bw)
	= \lambda_{0} \insum_{\sub\in\subs} \left[ w_{\sub}/\intf_{\sub} - 1 \right]_{+}
\end{equation}
where $\lambda_{0}>0$ is a sensitivity parameter and $[x]_{+} \equiv \max\{x,0\}$.
In this model, \acp{SU} are only charged when the \acp{PU}' interference tolerance is actually violated, and the steepness of the sanction is controlled by the pricing parameter $\lambda_{0}$;
as such, in the large $\lambda_{0}$ limit, \eqref{eq:VP} treats the \acp{PU}' requirements as a hard constraint with very sharp violation costs.
\end{enumerate}

In light of all this, the \emph{utility} of user $k$ is defined as:
\begin{equation}
\label{eq:pay}
\pay_{k}(\bp)
	= \rate_{k}(\bp) - \cost_{k}(\bp),
\end{equation}
i.e., $\pay_{k}(\bp)$ is simply the user's achieved transmission rate minus the cost reimbursed to the network operator in order to achieve it.
In turn, this leads to the \emph{cost-efficient throughput maximization game} $\game\equiv\game(\play,\strat,\pay)$, defined as follows:
\begin{enumerate}

\item
The game's \emph{players} are the system's \aclp{SU} $k\in\play=\{1,\dotsc,K\}$.

\item
The \emph{action set} of each player/user is the set of feasible power allocation profiles $\strat_{k} = \{\bp_{k}\in\R^{\subs}: p_{k\sub} \geq 0 \text{ and } \sum_{\sub\in\subs} p_{k\sub} \leq P_{k}\}$.

\item
Each player's \emph{utility function} $\pay_{k}\from\strat\equiv\prod_{k}\strat_{k}\to\R$ is given by \eqref{eq:pay}.
\end{enumerate}
In this context, we will say that a power allocation profile $\eq\in\strat$ is at \emph{\ac{NE}} when
\begin{equation}
\label{eq:Nash}
\tag{NE}
\pay_{k}(\eqvec_{k};\eqvec_{-k})
	\geq \pay_{k}(\bp_{k};\eqvec_{-k})
	\quad
	\text{for all $\bp_{k}\in\strat_{k}$ and for all $k\in\play$},
\end{equation}
i.e., when
\emph{each user's chosen power profile $\eq_{k}\in\strat_{k}$ is individually cost-efficient given the power profile of his opponents}
(so no user has a unilateral incentive to deviate).
Accordingly, our goal in the rest of the paper will be to characterize the Nash equilibria of $\game$ and to provide distributed optimization methods allowing selfish (and myopic) \acp{SU} to converge to equilibrium in the absence of centralized medium access control mechanisms.

\PM{Feel free to include a more elaborate discussion of Nash equilibria here if you deem it relevant.}


\section{Equilibrium Analysis, Learning and Convergence}
\label{sec:learning}

In this section, we focus on the characterization of the \acp{NE} of the cost-efficient rate maximization game $\game$ and on how players can attain such a state by means of a simple, adaptive learning process.

\subsection{Equilibrium structure and characterization}
\label{sec:characterization}

A key property of the rate maximization game $\game$ is that it admits a \emph{potential function} \cite{MS96}:

\begin{proposition}
\label{prop:potential}
Let $\bw$ be the aggregate \ac{SU} interference level defined as in \eqref{eq:MUI}.
Then, the function
\begin{equation}
\label{eq:potential}
\pot(\bp)
	= \insum_{\sub} \log\left(\noisevar_{\sub} + w_{\sub}\right)
	- \price_{0}(\bw)
	- \insum_{k} \price_{k}(\bp_{k})
\end{equation}
is an exact potential for the cost-efficient rate maximization game $\game$;
specifically:
\begin{equation}
\label{eq:potential-def}
\pay_{k}(\bp_{k};\bp_{-k}) - \pay_{k}(\bp_{k}';\bp_{-k})
	= \pot(\bp_{k};\bp_{-k}) - \pot(\bp_{k}';\bp_{-k})
\end{equation}
for all $\bp_{k},\bp_{k}'\in\strat_{k}$ and for all $\bp_{-k}\in\strat_{-k}\equiv\prod_{\ell\neq k} \strat_{k}$. 
\end{proposition}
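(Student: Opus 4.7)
The plan is to verify the exact potential condition \eqref{eq:potential-def} by direct computation. Fix a player $k$, a comparison pair $\bp_k, \bp_k' \in \strat_k$, and an opponent profile $\bp_{-k} \in \strat_{-k}$. I will compute the two sides of \eqref{eq:potential-def} separately and show that every term matches up. The crucial observation is that the ``unilateral deviation'' structure causes several contributions to drop out: whenever a quantity depends only on $\bp_{-k}$ (or only on $\bp_j$ for some $j \neq k$), it contributes zero to the difference.

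First I would rewrite the rate function \eqref{eq:rate} as $\rate_{k}(\bp) = \sum_{\sub} \log(\noisevar_{\sub} + w_{\sub}(\bp)) - \sum_{\sub} \log(\noisevar_{\sub} + \sum_{\ell\neq k} g_{\ell\sub} p_{\ell\sub})$, so that the second sum is manifestly independent of $\bp_k$. Then $\rate_k(\bp_k;\bp_{-k}) - \rate_k(\bp_k';\bp_{-k}) = \sum_\sub [\log(\noisevar_\sub + w_\sub(\bp_k;\bp_{-k})) - \log(\noisevar_\sub + w_\sub(\bp_k';\bp_{-k}))]$. Similarly, from \eqref{eq:cost}, the cost difference reduces to $[\price_0(\bw(\bp_k;\bp_{-k})) - \price_0(\bw(\bp_k';\bp_{-k}))] + [\price_k(\bp_k) - \price_k(\bp_k')]$, since the user-specific prices $\price_j(\bp_j)$ for $j \neq k$ are unaffected by player $k$'s deviation.

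Next, I would expand $\pot(\bp_k;\bp_{-k}) - \pot(\bp_k';\bp_{-k})$ using \eqref{eq:potential}. The first sum produces exactly the same log-difference as in the rate computation, the $-\price_0(\bw)$ term produces exactly the same contribution as in the cost computation (with a sign), and the sum $-\sum_j \price_j(\bp_j)$ collapses to $-[\price_k(\bp_k) - \price_k(\bp_k')]$ because all $j \neq k$ terms cancel. Matching the two expressions for $\pay_k(\bp_k;\bp_{-k}) - \pay_k(\bp_k';\bp_{-k})$ and $\pot(\bp_k;\bp_{-k}) - \pot(\bp_k';\bp_{-k})$ term by term yields \eqref{eq:potential-def}.

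There is no real obstacle here: the proof is a routine bookkeeping argument whose only substantive input is the algebraic split of $\log(1 + \sinr_{k\sub})$ into an aggregate-interference term plus a term independent of $\bp_k$, which is exactly what makes the rate function compatible with a potential in the first place. I would keep the presentation compact and emphasize that the construction works for \emph{any} price functions satisfying (A1)--(A2), since convexity and monotonicity play no role in establishing the identity \eqref{eq:potential-def} itself.
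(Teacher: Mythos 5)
Your proof is correct and is precisely the computation that the paper's one-line proof (``By inspection'') leaves implicit: the split of $\rate_{k}$ into $\sum_{\sub}\log(\noisevar_{\sub}+w_{\sub})$ plus a $\bp_{k}$-independent remainder is already displayed in \eqref{eq:rate}, and the rest is the routine cancellation you describe. Nothing further is needed.
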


\begin{IEEEproof}
By inspection.
\end{IEEEproof}

Since the price functions $\price_{0}$ and $\price_{k}$ are convex, the potential function $\pot$ is itself concave (though not necessarily strictly so; see below).
By Proposition \ref{prop:potential}, it then follows that maximizers of $\pot$ are \acp{NE} of $\game$ (so the Nash set of $\game$ is nonempty);
furthermore, with $\pot$ concave in $\bp$ and $\pay_{k}$ concave in $\bp_{k}$, every \acp{NE} of $\game$ is also a maximizer of $\pot$.
In this way, finding the equilibria of $\game$ boils down to the nonlinear optimization problem:
\begin{equation}
\label{eq:potential-max}
\begin{aligned}
\text{maximize}
	&\quad
	\pot(\bp),
	\\
\text{subject to}
	&\quad
	p_{k\sub}\geq0,
	\;\;
	\insum_{s} p_{k\sub} \leq P_{k}.
\end{aligned}
\end{equation}
Thanks to this formulation, we obtain the following equilibrium uniqueness result for $\game$:

\begin{theorem}
\label{thm:equilibrium}
Assume that:
\begin{enumerate}
[\upshape (C1)]
\item
Each user-specific price function $\price_{k}$ is strictly increasing in each of its arguments.
\\
or:
\item
The flat spectrum access price function $\price_{0}$ is either gentle enough or steep enough :
$0\leq \frac{\pd\price_{0}}{\pd w_{\sub}} < \left( \noisevar_{\sub} + \sum_{k} g_{k\sub} P_{k} \right)^{-1}$
or
$\frac{\pd\price_{0}}{\pd w_{\sub}} > 1/\noisevar_{\sub}$
for all $w_{\sub}$ and for all subcarriers $\sub\in\subs$.
\end{enumerate}
Then, the cost-efficient throughput maximization game $\game$ admits a unique Nash equilibrium for almost all realizations of the channel gain coefficients $g_{k\sub}$.
More generally, even if both \textup{(C1)} and \textup{(C2)} fail to hold, the set of Nash equilibria of $\game$ is a convex polytope of dimension at most $S(K-1)$.
\end{theorem}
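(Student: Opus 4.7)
By Proposition~\ref{prop:potential}, the set $M^{*}$ of Nash equilibria of $\game$ coincides with $\argmax_{\bp\in\strat}\pot(\bp)$, and the potential decomposes as $\pot(\bp) = F(\bw(\bp)) - \sum_{k}\price_{k}(\bp_{k})$, where $F(\bw) = \sum_{\sub}\log(\noisevar_{\sub}+w_{\sub}) - \price_{0}(\bw)$ and $\bp\mapsto\bw(\bp) = G\bp$ is the linear map built from the channel gains $g_{k\sub}$. Strict concavity of the $\log$ terms together with convexity of $\price_{0}$ makes $F$ \emph{strictly} concave in $\bw$; combined with convexity of each $\price_{k}$, this renders $\pot$ concave and $M^{*}$ a nonempty convex subset of the polytope $\strat$. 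If $\bp^{*},\bp^{**}\in M^{*}$ are distinct, concavity of $\pot$ puts their midpoint in $M^{*}$ too, and strict concavity of $F$ in $\bw$ then forces $\bw(\bp^{*}) = \bw(\bp^{**})$ — otherwise $F\circ \bw$ would be strictly larger at the midpoint, contradicting maximality. Hence every \ac{NE} shares a common aggregate interference profile $\bw^{*}$, so $M^{*} \subseteq \strat \cap \{\bp : G\bp = \bw^{*}\}$; since the rows of $G$ have disjoint supports in $\R^{KS}$ and are nonzero for a.e.\ realization of $\{g_{k\sub}\}$, this affine slice has dimension $KS - S = S(K-1)$, which yields the general dimension bound.

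To upgrade this bound to a singleton under (C1) or (C2), I would reason via the \ac{KKT} conditions for the concave program $\max_{\bp\in\strat}\pot(\bp)$. Under the steep branch of (C2), the inequality $\partial\price_{0}/\partial w_{\sub} > 1/\noisevar_{\sub} \ge 1/(\noisevar_{\sub}+w_{\sub})$ yields $\partial\pot/\partial p_{k\sub} < 0$ on all of $\strat$, so $M^{*} = \{0\}$ trivially. In the gentle branch of (C2) or under (C1), the \ac{KKT} equations at any $\bp^{*}\in M^{*}$ reduce, on the support of $\bp^{*}$, to relations of the form $g_{k\sub}\,c_{\sub}^{*} - \partial\price_{k}(\bp_{k}^{*})/\partial p_{k\sub} = \mu_{k}^{*}$, where $c_{\sub}^{*} = (\noisevar_{\sub}+w_{\sub}^{*})^{-1} - \partial\price_{0}(\bw^{*})/\partial w_{\sub}$ is already pinned down by the $\bw^{*}$ fixed in the first paragraph. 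Either strict positivity of $c_{\sub}^{*}$ (gentle (C2)) or strict positivity of $\partial\price_{k}/\partial p_{k\sub}$ (C1), combined with the bookkeeping of active constraints, then implies that any two distinct $\bp^{*},\bp^{**}\in M^{*}$ (whose difference must lie in $\ker G$) would enforce a nontrivial polynomial relation among the $g_{k\sub}$ — essentially a coincidence of water-filling levels across subcarriers — and this relation is satisfied only on a Lebesgue-null subset of channel realizations.

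The principal obstacle is the (C1) branch: strict monotonicity of $\price_{k}$ does not upgrade to strict convexity of $\pot$, so uniqueness cannot be read off the Hessian and genuinely requires the ``almost all $\{g_{k\sub}\}$'' clause in the statement. The proof therefore has to carry out an explicit case split on which of the constraints $p_{k\sub}\ge 0$ and $\sum_{\sub}p_{k\sub}\le P_{k}$ are active at the equilibrium, show that each such support pattern determines $\bp^{*}$ up to an algebraic condition on the channel gains, and take a union of the resulting measure-zero exceptional sets over the finitely many possible support patterns.
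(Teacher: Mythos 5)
Your first paragraph and your treatment of the steep branch of (C2) are correct and essentially match the paper: the paper likewise works with the decomposition $\pot=\pot_{0}+\pot_{+}$, uses strict concavity of $\pot_{0}$ in $\bw$ to force a common aggregate interference profile across all equilibria (phrased there as the kernel condition $\insum_{k}g_{k\sub}z_{k\sub}=0$ on the Hessian of $\pot$), and deduces the $S(K-1)$ bound from it; your observation that the steep branch collapses the argmax to $\bp=\mathbf{0}$ is a legitimate shortcut of the paper's uniform treatment via the sign of $r_{\sub}=\bigl((\noisevar_{\sub}+w_{\sub})^{-1}-\pd\price_{0}/\pd w_{\sub}\bigr)^{-1}$ in the KKT system.

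The gap is in your second paragraph: for (C1) and the gentle branch of (C2) you assert that non-uniqueness ``enforces a nontrivial polynomial relation among the $g_{k\sub}$,'' but you never exhibit the relation or verify its nontriviality, and that is where the entire difficulty of the theorem lives. The paper does two quite specific things here. For (C1) it picks a direction $\bz$ in the kernel of the interference map supported on a single subcarrier and two users ($z_{k\sub}=g_{\ell\sub}$, $z_{\ell\sub}=-g_{k\sub}$), notes that $\pot_{0}$ is constant along $\bz$ while optimality forces the directional derivative of $\insum_{k}\price_{k}$ to vanish, and obtains the explicit identity $g_{k\sub}\,\pd\price_{\ell}/\pd p_{\ell\sub}=g_{\ell\sub}\,\pd\price_{k}/\pd p_{k\sub}$ holding along a whole segment; with strictly increasing prices this can only happen for linear prices and a measure-zero ratio of gains. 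For gentle (C2) the needed relation is not a single water-filling coincidence but a combinatorial statement: via a graph/cycle argument one shows that generically no two users share two subcarriers and there are at most $S-1$ instances of a user occupying more than one subcarrier, so every equilibrium lies in a face of $\strat$ of dimension at most $S-1$; uniqueness then follows from the transversality count $(KS-S)+(S-1)<KS$ against the $(KS-S)$-dimensional Nash polytope. Your proposed alternative (enumerating support patterns and taking a union of measure-zero sets) could in principle work, but as written it is a promissory note: you would still have to show, pattern by pattern, that the resulting algebraic conditions on the $g_{k\sub}$ are genuinely non-tautological (e.g., that they reduce to cycle relations of the form $\inprod_{j} g_{k_{j}\sub_{j}}/g_{k_{j+1}\sub_{j+1}}=1$ rather than to identities), and that verification is precisely the content the proof cannot do without.
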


\begin{IEEEproof}
See Appendix \ref{app:proofs-equilibrium}.
\end{IEEEproof}

\begin{remark}
The ``almost all'' part of the statement of Theorem \ref{thm:equilibrium} should be interpreted with respect to Lebesgue measure \textendash\ i.e., uniqueness holds except for a set of price functions and channel gain coefficients of Lebesgue measure zero.
In particular, if channel gains are drawn at the outset of the game following some fixed, continuous probability distribution (e.g., induced by the \acp{SU}' spatial distribution), then this means that $\game$ admits a unique equilibrium with probability $1$.
\end{remark}

\subsection{Exponential learning and convergence to equilibrium}
\label{sec:algorithm}

The equilibrium characterization of Theorem \ref{thm:equilibrium} is crucial from the standpoint of \ac{DSM} because it guarantees a very robust solution set (a convex polytope);
in fact, as we just saw, the game's equilibrium set is a singleton under fairly mild conditions for the users' price functions (e.g., that the user-specific price functions $\price_{k}$ be strictly increasing).
Regardless, given that it is far from clear how the system's users can compute the solution of the problem \eqref{eq:Nash}, our goal in this section will be to provide a distributed learning mechanism that can be employed by the system's users in order to reach a Nash equilibrium.

Our proposed algorithm will rely on the users' \emph{marginal utilities}:
\begin{equation}
\label{eq:marginal}
\payvec_{k}(\bp)
	= \nabla_{k}\pay_{k}(\bp)
\end{equation}
where $\nabla_{k}$ denotes differentiation with respect to the power profile $\bp_{k}$ of user $k$.
In particular, writing $\payvec_{k} = (\payv_{k,1},\dotsc,\payv_{k,S})$, some easy algebra yields the component-wise expression
\begin{equation}
\label{eq:marginal-coords}
\payv_{k\sub}(\bp)
	= \frac{\pd\pay_{k\sub}}{\pd p_{k\sub}}
	= g_{k\sub} \left(\frac{1}{\noisevar_{\sub} + w_{\sub}} - \frac{\pd\price_{0}}{\pd w_{\sub}} \right)
	- \frac{\pd\price_{k}}{\pd p_{k\sub}},
\end{equation}
which shows that $\payv_{k\sub}(\bp)$ can be calculated by each individual user knowing only their \ac{SINR} per subcarrier (which is measured locally) and the functional form of the price functions $\price_{0}$ and $\price_{k}$ (which are agreed upon by the network's \acp{SU} and the \ac{PU} and are thus also known locally).
Indeed, Eq.~\eqref{eq:sinr} shows that the aggregate interference level on subcarrier $\sub$ can be calculated by user $k$ as:
\begin{equation}
w_{\sub}(\bp)
	= \insum_{k} g_{k\sub} p_{k\sub}
	= g_{k\sub}p_{k\sub} + \insum_{\ell \neq k} g_{\ell\sub} p_{\ell\sub}
	= g_{k\sub} p_{k\sub} + \frac{g_{k\sub} p_{k\sub}}{\sinr_{k\sub}(\bp)}
	= g_{k\sub} p_{k\sub} \frac{1 + \sinr_{k\sub}(\bp)}{\sinr_{k\sub}(\bp)},
\end{equation}
i.e., requiring only local \ac{SINR} measurements and the knowledge of the user's channel (which can in turn be obtained through the exchange of pilot signals).
As a result, the marginal utility vectors $\payv_{k}$ can be calculated in a completely distributed fashion with locally available information.

By definition, the users' marginal utility vectors define the direction of unilaterally steepest utility ascent, i.e., the best direction that a user could follow in order to increase his utility.
As such, a natural learning process would be for each user to track this steepest ascent direction with the hopes of converging to a \acl{NE};
however, given the problem's power and positivity constraints, this method may quickly lead to inadmissible power profiles that do not lie in $\strat$ \textendash\ in which case convergence is also out of the question.

To account for these constraints, we will employ an interior point method which increases power on subcarriers that seem to be performing well, without ever shutting off a particular channel completely.
Formally, consider the \emph{exponential regularization map} $\gibbsvec\from\R^{\subs}\to\R_{+}^{\subs}$ given by
\begin{equation}
\label{eq:Gibbs}
\gibbsvec(\bv)
	= \frac{1}{1 + \insum_{\sub} \exp(\payv_{\sub})}
	\left( \exp(\payv_{1}),\dotsc,\exp(\payv_{S}) \right).
\end{equation}
This map has the property that it assigns positive weight (power) to all subcarriers and exponentially more weight to the subcarriers $\sub\in\subs$ with the highest marginal utilities $\payv_{\sub}$.
Furthermore, if all marginal utilities are relatively low (indicating high transmission costs), all assigned weights will also be low in order to decrease the user's cost.
With this in mind, our proposed exponential learning algorithm for cost-efficient rate maximization is as follows:

\begin{algorithm}[H]
{\sf
\small
\vspace{3pt}
Parameter:
step size $\step_{n}$.
\\[1pt]
Initialize:
$n \leftarrow 0$;
scores $\score_{k\sub} \leftarrow 0$ for all $k\in\play$, $\sub\in\subs$.
\\[1pt]
\Repeat
	{$n \leftarrow n+1$;
	\\[1pt]
	\ForEach{user $k \in \play$}
		{%
		\ForEach{subcarrier $\sub\in\subs$}
		{%
			set transmit power $\dis p_{k\sub} \leftarrow P_{k} \frac{\exp(\score_{k\sub})}{1 + \sum_{r} \exp(\score_{kr})}$;
			\\[1pt]
			measure $\sinr_{k\sub}$;
			\\[1pt]
			update marginal utilities:
			$\dis \payv_{k\sub}\leftarrow \frac{1}{p_{k\sub}} \frac{\sinr_{k\sub}}{1 + \sinr_{k\sub}} - \frac{\pd\cost_{k}}{\pd p_{k\sub}}$;
			\\[1pt]
			update scores:
			$\dis \score_{k\sub} \leftarrow \score_{k\sub} + \step_{n} \payv_{k\sub}$;
		} 
	} 
	\textbf{until} termination criterion is reached.
} 
} 
\caption{Exponential Learning for Cost-Efficient Rate Maximization}
\label{alg:XL}
\end{algorithm}

From an implementation point of view, Algorithm \ref{alg:XL} has the following desirable properties:
\begin{enumerate}[(P1)]
\item
It is \emph{distributed:}
users only need local or publicly available information in order to run it.
\item
It is \emph{stateless:}
users do not need to know the state of the system (e.g., its topology).
\item
It is \emph{reinforcing:}
users tend to allocate more power to cost-efficient subcarriers.
\end{enumerate}

We then obtain:

\begin{theorem}
\label{thm:conv}
Let $\step_{n}$ be a variable step-size sequence such that $\sum_{n} \step_{n} = \infty$ and $\sum_{j=1}^{n} \step_{j}^{2} \big/ \sum_{j=1}^{n} \step_{j} \to 0$.
Then, Algorithm \ref{alg:XL} converges to \acl{NE} in the cost-efficient rate maximization game $\game$.
\end{theorem}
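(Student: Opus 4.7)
The plan is to view Algorithm~\ref{alg:XL} as a discrete-time stochastic approximation of an underlying continuous-time learning process, and then exploit the potential structure of $\game$ established in Proposition~\ref{prop:potential} via a Lyapunov argument.

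First, I would identify the mean field of the algorithm. Writing $\bp_{k}(\scorevec_{k}) = P_{k}\gibbsvec(\scorevec_{k})$ for the softmax step inside the inner loop, the score update reads
\begin{equation*}
\scorevec_{k}(n+1) = \scorevec_{k}(n) + \step_{n}\payvec_{k}(\bp(n)),
\end{equation*}
which is precisely the explicit Euler discretization of the ODE $\ddt \scorevec_{k} = \payvec_{k}(\bp(\scorevec))$. Pushed forward through $\gibbsvec$, this induces an exponential/replicator-type dynamics on each power simplex $\strat_{k}$; the Lipschitz hypothesis (A2) on the price functions propagates to $\payvec_{k}$, so the ODE is well-posed and its trajectories remain in $\strat$.

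Second, I would exhibit $\pot$ from \eqref{eq:potential} as a strict Lyapunov function for this mean dynamics. Proposition~\ref{prop:potential} gives $\nabla_{k}\pay_{k} = \nabla_{k}\pot$, so along any trajectory
\begin{equation*}
\ddt \pot(\bp(t)) = \insum_{k}\product{\nabla_{k}\pot}{\ddt \bp_{k}} = \insum_{k}\product{\payvec_{k}}{\ddt \bp_{k}},
\end{equation*}
and a short computation that uses the positive semidefiniteness of the Jacobian of $P_{k}\gibbsvec(\cdot)$ shows that this quantity is non-negative, vanishing exactly at the KKT points of $\pot$ on $\strat$. Because $\pot$ is concave under (A2), this stationary set coincides with its argmax, which in turn is the Nash set $\eqset$ by the potential-game argument already underlying Theorem~\ref{thm:equilibrium}.

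Third, I would transfer convergence from continuous to discrete time via Bena\"{\i}m's theory of asymptotic pseudotrajectories. The step-size assumptions $\sum_{n}\step_{n}=\infty$ and $\sum_{j\leq n}\step_{j}^{2}\big/\sum_{j\leq n}\step_{j}\to 0$ are precisely what is needed so that the linearly interpolated iterates of Algorithm~\ref{alg:XL} form an \acl{APT} of the mean ODE; combined with the strict Lyapunov property of $\pot$ and LaSalle's invariance principle, this forces $\bp(n)$ to converge to the closed, convex set $\eqset$ of \aclp{NE}. The main obstacle, in my view, is the possible non-uniqueness of equilibria when neither (C1) nor (C2) of Theorem~\ref{thm:equilibrium} holds: the Lyapunov argument by itself yields convergence only to the \emph{set} $\eqset$, and one must still rule out drift along the equilibrium polytope. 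To pin the limit down, I would complement the above by showing that for every $\eqvec\in\eqset$ the entropic (Kullback--Leibler) divergence $\sum_{k} D(\eqvec_{k}\,\|\,\bp_{k}(n))$ behaves as an almost-supermartingale via the three-point identity for the Gibbs--softmax mirror map, so that together with the step-size conditions it singles out a unique limit within $\eqset$. In the uniqueness regime of Theorem~\ref{thm:equilibrium} this refinement is of course automatic.
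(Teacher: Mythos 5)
Your proposal follows essentially the same route as the paper: Algorithm \ref{alg:XL} is treated as a stochastic approximation (an asymptotic pseudotrajectory, in the sense of Bena\"{\i}m) of the mean exponential-learning ODE, the potential structure of Proposition \ref{prop:potential} supplies the Lyapunov argument, and the entropic Bregman divergence associated with the Gibbs map controls the discrete iterates under the stated step-size conditions. The only real difference is cosmetic: the paper runs the continuous-time Lyapunov argument directly on $\breg(\eqvec,\bp(t))$ for a fixed equilibrium (rather than on $\pot$ followed by LaSalle, which would require extra care at boundary points where the softmax Jacobian degenerates without the KKT conditions holding), and it handles a possibly non-singleton equilibrium set by telescoping that same divergence inequality to establish recurrence near the Nash polytope before invoking Bena\"{\i}m's limit-set theorem --- precisely the refinement you sketch in your final step.
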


\begin{IEEEproof}
See Appendix \ref{app:proofs-convergence}.
\end{IEEEproof}

\begin{remark*}
The condition $\sum_{j=1}^{n} \step_{j}^{2} \big/ \sum_{j=1}^{n} \step_{j} \to 0$ requires the use of a decreasing step-size $\step_{n}$ (which slows down the algorithm), but the rate of decay of $\step_{n}$ can be arbitrarily slow \textendash\ in stark contrast to the much more stringent requirement $\insum_{j}\step_{j}^{2}< \infty$ that is common in the theory of stochastic approximation \cite{Ben99}.
As such, Algorithm \ref{alg:XL} can be used with an effectively constant (very slowly varying) step-size, and still converge to equilibrium;
we explore this issue in detail in Section \ref{sec:numerics}.
\end{remark*}

\section{Fast-Fading and User Mobility}
\label{sec:fading}

Our analysis so far has focused on static channels, corresponding to wireless users with little or no mobility.
In this section, we investigate the case of mobile users where the channel gain coefficients evolve over time following a stationary ergodic process.

In this fast-fading regime, the users' achievable rate is given by the ergodic average \cite{GV97}:
\begin{equation}
\label{eq:rate-erg}
\bar\rate_{k}(\bp)
	= \ex_{g} \rate_{k}(\bp)
	= \insum_{\sub\in\subs} \ex_{g} \log\left(1 + \frac{g_{k\sub} p_{k\sub}}{\noisevar_{\sub} + \insum_{\ell\neq k} g_{\ell\sub} p_{\ell\sub}}\right),
\end{equation}
leading to the corresponding average utility functions:
\begin{equation}
\label{eq:pay-erg}
\bar\pay_{k}(\bp)
	= \bar\rate_{k}(\bp) - \ex_{g}[\cost_{k}(\bp)]
	= \bar\rate_{k}(\bp) - \ex_{g} \big[ \price_{0}(\bw) + \price_{k}(\bp_{k}) \big],
\end{equation}
where the expectation $\ex_{g}[\cdot]$ is taken with respect to the law of the channel gain coefficients $g_{k\sub} = \abs{h_{k\sub}}^{2}$ (recall here that the aggregate \ac{MUI} per subcarrier $w_{\sub} = \insum_{k\in\play} g_{k\sub} p_{k\sub}$ depends itself on the realization of the channels).
We thus obtain the following game-theoretic formulation of cost-efficient throughput maximization in the presence of fast fading:
\begin{equation}
\label{eq:game-erg}
\begin{aligned}
\text{maximize}
	&\quad
	\bar\pay_{k}(\bp_{k};\bp_{-k})
	\quad
	\text{(unilaterally for all $k\in\play$)},
	\\
\text{subject to}
	&\quad
	\bp_{k}\in\strat_{k}.
\end{aligned}
\end{equation}

As in the static regime, we then obtain the following characterization of Nash equilibria:

\begin{proposition}
\label{prop:potential-erg}
With notation as above, let
\begin{equation}
\label{eq:potential-erg}
\bar\pot(\bp)
	= \insum_{\sub} \ex_{g} \log\left(\noisevar_{\sub} + w_{\sub}\right)
	- \ex_{g} \left[ \price_{0}(\bw) + \insum_{k} \price_{k}(\bp_{k}) \right].
\end{equation}
Then, $\bar\pot(\bp)$ is an exact potential for the ergodic rate maximization game $\bar\game \equiv \bar\game(\play,\strat,\bar\pay)$.
In particular, if the channels' law is atom-free (i.e., it is absolutely continuous with respect to Lebesgue measure), $\bar\pot$ is strictly concave and $\bar\game$ admits a unique Nash equilibrium.
\end{proposition}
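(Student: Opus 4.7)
The plan is to mirror the argument for Proposition~\ref{prop:potential} on a per-realization basis, and then promote ordinary concavity to \emph{strict} concavity using the atom-free hypothesis on the channel distribution. I would proceed in three steps.

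First, I would establish the potential structure by invoking Proposition~\ref{prop:potential} pathwise. For any fixed realization of the channel coefficients $\{g_{k\sub}\}$, the static identity \eqref{eq:potential-def} holds verbatim. Taking $\ex_{g}[\cdot]$ on both sides and interchanging expectation with the finite sum over subcarriers (Fubini), one obtains
\begin{equation*}
\bar\pay_{k}(\bp_{k};\bp_{-k}) - \bar\pay_{k}(\bp_{k}';\bp_{-k})
= \bar\pot(\bp_{k};\bp_{-k}) - \bar\pot(\bp_{k}';\bp_{-k}),
\end{equation*}
which is the exact potential property for $\bar\game$.

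Next, I would show that $\bar\pot$ is strictly concave on $\strat$. The price contribution $-\ex_{g}[\price_{0}(\bw) + \sum_{k} \price_{k}(\bp_{k})]$ is concave, since $\bw$ is linear in $\bp$ and $\price_{0},\price_{k}$ are convex by assumption (A2); it therefore suffices to show that $\bar\Phi(\bp) \equiv \sum_{\sub} \ex_{g}\log(\noisevar_{\sub} + w_{\sub}(\bp))$ is strictly concave. For any fixed $g$, the map $\bp \mapsto \log(\noisevar_{\sub} + w_{\sub}(\bp))$ is concave as the composition of $\log$ with an affine map, so $\bar\Phi$ is concave after averaging. For strictness, pick distinct $\bp,\bp' \in \strat$ and any $\lambda \in (0,1)$; there exist indices $(k^{\ast}, \sub^{\ast})$ with $p_{k^{\ast}\sub^{\ast}} \neq p'_{k^{\ast}\sub^{\ast}}$. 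The affine functional $g \mapsto \sum_{k} g_{k\sub^{\ast}}(p_{k\sub^{\ast}} - p'_{k\sub^{\ast}})$ has a non-zero coefficient on $g_{k^{\ast}\sub^{\ast}}$, so under the atom-free hypothesis its zero-set has Lebesgue \textendash\ hence probability \textendash\ zero. Consequently $w_{\sub^{\ast}}(\bp) \neq w_{\sub^{\ast}}(\bp')$ with positive probability, and the strict concavity of $\log$ along this direction upgrades the pointwise Jensen inequality to a strict one in expectation, yielding strict concavity of $\bar\Phi$ and hence of $\bar\pot$.

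Finally, strict concavity of $\bar\pot$ on the compact convex polytope $\strat$ produces a unique maximizer; by the potential property just established (and the concavity of each $\bar\pay_{k}$ in $\bp_{k}$), Nash equilibria of $\bar\game$ coincide with the constrained maximizers of $\bar\pot$, so $\bar\game$ admits a unique \ac{NE}. I expect the strict concavity step to be the main obstacle: for any individual channel realization the log term is strictly concave only along the one-dimensional direction $(g_{1\sub},\dotsc,g_{K\sub})$, and one must leverage the atom-free hypothesis to argue that averaging over $g$ genuinely ``covers'' every direction in $\strat$ \textendash\ everything else reduces either to Proposition~\ref{prop:potential} or to standard convex-analytic facts.
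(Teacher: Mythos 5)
Your proof is correct, and it hinges on exactly the same observation as the paper's: for a fixed nonzero displacement $\bz=\bp-\bp'$, the degeneracy condition $\insum_{k} g_{k\sub} z_{k\sub} = 0$ is a proper linear constraint on the channel vector, hence a null event when the law of $g$ is atom-free. The two arguments differ only in how strict concavity is extracted from this fact. The paper works at second order: it uses dominated convergence to write $\hess(\bar\pot) = \ex_{g}\big[\hess_{g}(\pot)\big]$ and then notes that the quadratic form $\bz^{\dag}\cdot\hess_{g}(\pot)\cdot\bz$ can vanish only on the null set above, so the averaged form is definite. You work at zeroth order: you fix a chord $[\bp,\bp']$, observe that $w_{\sub^{\ast}}(\bp)\neq w_{\sub^{\ast}}(\bp')$ off a hyperplane in $g$-space (so in fact with probability one, not merely positive probability), and upgrade the pointwise Jensen inequality for $\log(\noisevar_{\sub^{\ast}}+w_{\sub^{\ast}})$ to a strict inequality in expectation. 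Your route buys two small things: it avoids the interchange of differentiation and expectation altogether, and it does not presuppose twice-differentiability of the price functions \textendash\ assumption (A2) only gives Lipschitz continuity and convexity, and indeed \eqref{eq:VP} has a kink, so the Hessian-based argument is, strictly speaking, only applicable after isolating the smooth logarithmic part (which carries all the strict concavity) from the merely concave pricing part, exactly as you do. The paper's route, in exchange, recycles the Hessian computation already carried out in the proof of Theorem \ref{thm:equilibrium}. Your treatment of the exact-potential identity (pathwise application of \eqref{eq:potential-def} followed by taking $\ex_{g}$) and of uniqueness (strict concavity on the compact convex set $\strat$, plus the equivalence of \ac{NE} with maximizers of the potential under concavity of each $\bar\pay_{k}$ in $\bp_{k}$) matches the paper's in substance.
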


\begin{IEEEproof}
See Appendix \ref{app:proofs-fading}.
\end{IEEEproof}

Proposition \ref{prop:potential-erg} shows that the inherent stochasticity in the users' channels actually helps in guaranteeing a very robust solution set for the cost-efficient throughput maximization problem \eqref{eq:game-erg} (see also \cite{MBML12} for a related result in the context of rate control).
On the other hand, the expectation over the users' channels is typically hard to carry out (especially beyond the Gaussian \acs{iid} regime), so it is not clear how to calculate the ergodic marginal utilities $\bar\payvec_{k}(\bp) = \nabla_{k} \bar\pay_{k}(\bp) = \ex_{g}[\payvec(\bp)]$.
Thus, instead of trying to reach a \acl{NE} by employing a variant of Alg.~\ref{alg:XL} run with the users' ergodic marginal utilities (whose calculation requires considerable computation capabilities and a good deal of knowledge on the channels' statistics), we will consider the same sequence of events as in the case of static channels:
\begin{enumerate}
\item
At every update period $n=1,2,\dotsc$, each user $k\in\play$ calculates his instantaneous marginal utility vector $\payvec_{k}(n)$ following \eqref{eq:marginal-coords}:
\begin{equation}
\label{eq:marginal-inst}
\hat\payv_{k\sub}(n)
	= \frac{1}{p_{k\sub}(n)} \frac{\sinr_{k\sub}(n)}{1 + \sinr_{k\sub}(n)}
	- \left.\frac{\pd\cost_{k}}{p_{k\sub}}\right\vert_{\bp(n)}
\end{equation}
\item
Users update their powers following the recursion step of Alg.~\ref{alg:XL}, and the process repeats.
\end{enumerate}

Remarkably, despite the inherent stochasticity, we have:
\begin{theorem}
\label{thm:conv-erg}
Assume that the variance of the users' channel gain coefficients is finite.
If Alg.~\ref{alg:XL} is run with step-sizes $\step_{n}$ such that $\sum_{n} \step_{n} = \infty$ and $\sum_{j=1}^{n} \step_{j}^{2} \big/ \sum_{j=1}^{n} \step_{j} \to 0$, then the users' power profiles converge to \acl{NE} in the cost-efficient ergodic rate maximization game $\bar\game$ \textup(a.s.\textup).
\end{theorem}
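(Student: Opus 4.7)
The plan is to recast Algorithm \ref{alg:XL} under fast fading as a stochastic approximation of a deterministic continuous-time dynamics driven by the ergodic marginal utility $\nabla\bar\pot$, and to exploit the potential structure of Proposition \ref{prop:potential-erg} as a Lyapunov function. First, I would rewrite the score recursion underlying Algorithm \ref{alg:XL} as
\begin{equation*}
\score_{k\sub}(n+1)
    = \score_{k\sub}(n) + \step_n \hat\payv_{k\sub}(n),
    \qquad
    \hat\payv_{k\sub}(n) = \bar\payv_{k\sub}(\bp(n)) + \xi_{k\sub}(n),
\end{equation*}
where $\bar\payv_{k\sub}(\bp) = \ex_g[\payv_{k\sub}(\bp)] = \nabla_{k\sub}\bar\pot(\bp)$ is the ergodic marginal utility and $\xi_{k\sub}(n)$ is the stochastic residual. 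Since the fading realization $g(n)$ that enters $\hat\payv_{k\sub}(n)$ is drawn independently of the history $\filter_n = \sigma(\bp(1),\dotsc,\bp(n))$ produced by the algorithm, the sequence $\xi(n)$ is a martingale difference with respect to $\filter_n$; the finite-variance hypothesis on channel gains combined with compactness of $\strat$ then yields a uniform conditional variance bound $\ex[\smallnorm{\xi(n)}^{2}\vert \filter_n] \leq M < \infty$.

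Next, I would identify the mean-field ODE associated to this Robbins--Monro iteration, namely
\begin{equation*}
\dot \score_{k\sub}(t)
    = \bar\payv_{k\sub}(\bp(t)),
    \qquad
    \bp_k(t) = P_k\gibbsvec(\score_k(t)),
\end{equation*}
and use $\bar\pot$ as a Lyapunov function for it. Because Proposition \ref{prop:potential-erg} identifies $\bar\pot$ as an exact potential for $\bar\game$, the chain rule gives $\tfrac{d}{dt}\bar\pot(\bp(t)) = \insum_k\smallproduct{\nabla_k\bar\pot(\bp)}{\dot\bp_k}\geq 0$, with equality only at the critical points of $\bar\pot$ on $\strat$, which by potentiality coincide with the \acp{NE} of $\bar\game$. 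A standard LaSalle invariance argument then shows that every continuous-time trajectory of the ODE converges to this equilibrium set; when the channels' law is atom-free, Proposition \ref{prop:potential-erg} makes $\bar\pot$ strictly concave so the limit is the unique \ac{NE} of $\bar\game$.

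Finally, I would transfer convergence from the ODE to the iterates via the \ac{APT} framework of Bena\"im. After rescaling discrete time by $\tau_n = \sum_{j\leq n}\step_j$ and affinely interpolating the scores, the \ac{APT} test requires $\sup_{m:\,\tau_m\leq \tau_n+T}\smallnorm{\sum_{j=n}^{m}\step_j\xi_j}\to 0$ almost surely for every window length $T$; a Doob maximal inequality combined with $\ex[\smallnorm{\xi_j}^{2}\vert\filter_j]\leq M$ bounds the $L^{2}$-norm of this quantity by a constant multiple of $\sqrt{\sum_{j=n}^{m}\step_j^{2}}$, and the hypothesis $\sum_j\step_j^{2}/\sum_j\step_j\to 0$ is exactly what forces this bound to be negligible on windows of bounded $\tau$-length. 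Combining the resulting \ac{APT} property with the strict Lyapunov structure above then yields almost-sure convergence of $\bp(n)$ to the \ac{NE} set of $\bar\game$. The main obstacle \textendash\ and the reason the convergence analysis is more delicate here than in Theorem \ref{thm:conv} \textendash\ is precisely the weaker step-size condition: since $\sum_j\step_j^{2}$ is allowed to diverge, the classical quadratic-variation bound of stochastic approximation is unavailable in absolute time, and the martingale estimates must be carried out in the rescaled clock $\tau_n$, in which $\sum_j\step_j^{2}/\sum_j\step_j\to 0$ plays the role of the usual vanishing-variance hypothesis.
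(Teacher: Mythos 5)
Your overall architecture \textendash\ recasting the noisy recursion as a stochastic approximation of the mean ODE $\dot\scorevec = \bar\payvec(\bp)$, a potential-based Lyapunov argument for the ODE, and Bena\"im's \ac{APT} machinery to pass back to the iterates \textendash\ is the same as the paper's (the paper uses the Bregman divergence $\breg(\eqvec,\bp)$ rather than $\bar\pot$ itself as the Lyapunov function, but that is a minor variation). The substantive divergence is in how the martingale noise is controlled, and this is where your argument has a genuine gap. You assert that Doob's maximal inequality bounds the windowed noise $\sup_{m}\smallnorm{\sum_{j=n}^{m}\step_{j}\xi_{j}}$ in $L^{2}$ by a multiple of $\big(\sum_{j=n}^{m}\step_{j}^{2}\big)^{1/2}$ and that the hypothesis $\sum_{j\leq n}\step_{j}^{2}\big/\sum_{j\leq n}\step_{j}\to 0$ makes this negligible, yielding the \ac{APT} property \emph{almost surely}. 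An $L^{2}$ bound tending to zero only gives convergence in probability of each window supremum; upgrading to almost-sure convergence of the sup over all windows requires (via Borel\textendash Cantelli or a.s. convergence of the martingale $\sum_{j}\step_{j}\xi_{j}$) essentially $\sum_{n}\step_{n}^{2}<\infty$, or $\sum_{n}\step_{n}^{1+q/2}<\infty$ as in Bena\"im's Proposition 4.2 \textendash\ and this is precisely what the theorem's hypothesis does \emph{not} grant (e.g., $\step_{n}=1/\log n$ satisfies the stated conditions but is not square-summable, and for such step-sizes one can check with i.i.d.\ noise that the windowed sums exceed a fixed $\eps>0$ infinitely often). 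A secondary issue: even the deterministic claim that $\sum_{j=n}^{m}\step_{j}^{2}\to 0$ over windows of bounded $\tau$-length does not follow from the ratio condition alone without monotonicity of $\step_{n}$.

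The paper's proof is built to avoid exactly this obstruction. Instead of verifying the a.s.\ windowed-noise condition, it telescopes the Bregman inequality to obtain $D_{n+1}\leq D_{0}-t_{n}\big(c-\sum_{j}w_{j,n}\xi_{j}\big)+\tfrac{1}{2}M\sum_{j}\step_{j}^{2}\smallnorm{\hat\payvec(j)}^{2}$ with $t_{n}=\sum_{j\leq n}\step_{j}$ and $w_{j,n}=\step_{j}/t_{n}$, so the noise enters only through the \emph{weighted Ces\`aro average} $\sum_{j}w_{j,n}\xi_{j}$; this average vanishes a.s.\ by the strong law for martingale differences combined with Hardy's weighted summability criterion, and the quadratic-variation term is handled by Chow's theorem. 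This yields recurrence of $\bp(n)$ to every neighborhood of $\eqvec$, after which Bena\"im's attractor theorem (Theorem 6.10) concludes. To repair your route you would need either to strengthen the step-size hypothesis to $\sum_{n}\step_{n}^{2}<\infty$, or to replace the Doob estimate with an averaged argument of this type. Note also that you never establish that the discrete orbit actually enters the basin of attraction of $\eqvec$ before invoking the attractor theorem \textendash\ the scores live in an unbounded space, so this is not automatic from LaSalle applied to the ODE \textendash\ and supplying that recurrence is exactly the role of the paper's telescoping step.
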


\begin{IEEEproof}
See Appendix \ref{app:proofs-fading}.
\end{IEEEproof}

\begin{remark}
Thanks to Theorem \ref{thm:conv-erg}, we see that Algorithm \ref{alg:XL} enjoys the additional property:
\begin{enumerate}[(P1)]
\setcounter{enumi}{3}
\item
\emph{Flexibility:}
users can apply the algorithm ``as-is'' in both static and fast-fading environments.
\end{enumerate}
\end{remark}

\PM{Please feel free to include any remarks you deem relevant.}

\section{Numerical Results}
\label{sec:numerics}
To evaluate the performance of the proposed cost-efficient power allocation framework for throughput maximization in \acl{CR} networks, we have performed extensive numerical simulations over a wide range of system parameters.
In what follows, we provide a selection of the most representative cases.
\begin{table*}[t!]
   \begin{minipage}{0.48\textwidth}
   \centering
    \caption{\label{table:sim_setup}Simulation Setting}
    \begin{tabular}{|c|c|}
	    \hline
	    \textbf{Parameter} & \textbf{Value} \\
	    \hline
			Carrier frequency& $f_c=2.4\,\textrm{GHz}$ \\
			\hline
			Channel bandwidth	& $B=10.93\,\textrm{KHz}$ \\
			\hline
			Noise spectral density & $\noisedev_{\sub} = -173\,\textrm{dBm/Hz}$ \\
			\hline
			Maximum transmitting power of \acp{SU}	&  $P_{k} = 21.03\,\textrm{dBm}$\\
			\hline
			Edge of the simulated square area	&  $L=200\,\textrm{m}$\\
			\hline
			Transmitting power of the \ac{PU}	&  $P^{\PU}=30\,\textrm{dBm}$\\
			\hline
			Distance of the \ac{PU} from the receiver	& $d=50\,\textrm{m}$\\
			\hline
    \end{tabular}
  \end{minipage}
  \centering
  \begin{minipage}{0.51\textwidth}
  \centering
    \caption{\label{table:QoS_setup}\ac{PU}'s Requirements}
    \begin{tabular}{|c|c|}
	    \hline
	    \textbf{Data Rate} & \textbf{$\intf$} \\
	    \hline
			$12.8\,\textrm{KHz}$& $-68.3\,\textrm{dBm}$ \\
			\hline
			$16\,\textrm{KHz}$	& $-70\,\textrm{dBm}$ \\
			\hline
			$32\,\textrm{KHz}$ & $-75.6\,\textrm{dBm}$ \\
			\hline
    \end{tabular}
  \end{minipage}
\end{table*}

Throughout this section, and unless explicitly mentioned otherwise, we consider a population of $K=10$ \acp{SU} uniformly distributed over a square area and $S=10$ non-interfering subcarriers with channel gain coefficients $g_{k\sub}$ drawn according to the path-loss model for Jakes fading proposed in \cite{CCGHHKMMRX07};
the other relevant simulation parameters are summarized in Table \ref{table:sim_setup}.
For simplicity, we also assume that $\noisedev_{\sub}$ and $P_{k}$ are equal for all $\sub\in\subs$ and all $k\in\play$;
finally, we will assume that \acp{PU} have the same interference tolerance level $\intf_{\sub}$ over all subcarriers $\sub\in\subs$.

To begin with, we evaluate the impact of interference pricing on the \acp{SU}' behavior by introducing the \emph{violation index}
\begin{equation}
\label{eq:violation}
\VI_{\sub} = w_{\sub}/\intf,
\end{equation}
i.e., the amount of interference generated by \ac{SU}s on the $\sub$-th subcarrier relative to the \acp{PU}' tolerance.
Obviously, $\VI_\sub \leq 1$ means that the system's \acf{IT} requirements are not violated, whereas $\VI_\sub > 1$ indicates a violation of the \acp{PU}' 
contractual \ac{QoS} guarantees that will have to be reimbursed by the network's \ac{SU}s.
Accordingly, in Fig. \ref{fig:violation}, we plot the system's average violation index $\VI=1/|\subs|\insum_{\sub\in\subs} \VI_{\sub}$ as a function of the pricing parameter $\lambda_{0}$ for different values of the maximum interference tolerance level $\intf$ under the flat-rate pricing scheme $\price_0(\bw)$.
As can be seen, if the \acp{PU}' maximum interference tolerance level is low (i.e., $\intf$ is small), \acp{SU} violate the resulting \acl{IT} constraint only if the value of the price parameter $\lambda_0$ is also low. 
Thus, the \acp{PU}' \ac{QoS} guarantees are violated only in the ``soft pricing'' regime where the pricing parameter $\lambda_{0}$ is not high enough to safeguard the \acp{PU}' low interference tolerance.
On the other hand, if the cost incurred due to violations is high enough, no violations are performed:
our simulations show that under both the \ac{LP} and \ac{VP} models, there exists a threshold value of the cost parameter $\lambda_{0}$ such that  the violation index at the game's \ac{NE} is always less than one,
i.e., the interference generated by \acp{SU} on each subcarrier is never higher than the \acp{PU}' \ac{IT} constraints.

That being said, increasing the flat-rate pricing parameter $\lambda_{0}$ can lead to significantly different \ac{SU} behavior with respect to the \acp{PU}' interference tolerance level.%
\footnote{Recall here that, under \ac{VP}, the system's \acp{SU} are not charged when their aggregate interference $w_{\sub}$ is lower than $\intf$, and are (steeply) fined otherwise;
by contrast, the \ac{LP} model charges users even when the system's \ac{IT} constraints are not violated.}
In fact, under the \ac{LP} pricing model, \ac{SU} interference disincentives can become excessive:
Fig. \ref{fig:violation} shows that transmission costs for high $\lambda_{0}$ are so high (even for low interference levels) that \acp{SU} prefer to shut down and stop transmitting altogether.
On the other hand, under the \ac{VP} model, $\lambda_{0}$ affects the outcome of the game \emph{only} if the \acp{PU}' maximum interference tolerance is low:
increasing $\lambda_{0}$ beyond a certain value does not lead \acp{SU} to shut down and does not impact their sum-rate at equilibrium, precisely because \acp{SU} are charged only if they cause excessive interference to the system's \aclp{PU}.

To illustrate the system's transient phase when users employ Algorithm \ref{alg:XL} to optimize their utility, Fig. \ref{fig:wall} shows the aggregate interference on a given subcarrier when the interference constraint is set to $\intf=-70\,\textrm{dBm}$ and users are charged based on the \ac{VP} flat-rate model.
We see there that the \ac{PU}'s interference constraint is violated only during the first few iterations of the learning process:
when the interference in a given subcarrier exceeds the \acp{PU}' tolerance, the \acp{SU} experience a sharp drop in their marginal utilities \eqref{eq:marginal-coords} because of the incurred cost $\price_{0}^{\VP}(\bw)$, so Algorithm \ref{alg:XL} prompts them to reduce their radiated power in the next iteration in order to avoid further violations.
In this way, \ac{SU} violations are quickly reduced and the users' learning process converges to a violation-free \acl{NE} of the cost-efficient throughput maximization game.

In Fig.~\ref{fig:sum_rateSU}, we evaluate the impact of pricing and power constraints on the system's performance at \acl{NE} for different pricing models.
Under the \ac{VP} model, the \acp{SU}' sum-rate at equilibrium is affected by the cost parameter $\lambda_0$ only when $\lambda_{0}$ is small:
the reason for this is that \acp{SU} do not violate the \acp{PU}' \acl{IT} constraints for high $\lambda_{0}$ (cf. Fig.~\ref{fig:violation}), so their transmit power and sum-rate at equilibrium remains (almost) constant for high $\lambda_{0}$.
On the other hand, as in the case of Fig.~\ref{fig:violation}, Fig. \ref{fig:sum_rateSU} shows that the \ac{LP} model (solid lines) is strongly affected by the pricing parameter $\lambda_{0}$, for all $\lambda_{0}$ values:
since increasing $\lambda_{0}$ in the \ac{LP} model increases transmission costs across the board, each \ac{SU} is pushed to reduce his individual transmit power in order to reduce the induced mutual interference in the network commensurately.
It is worth noting however that increasing transmission costs is \emph{not} always detrimental to \acp{SU} under the \ac{LP} model:
as shown in Fig.~\ref{fig:sum_rateSU}, there is a pricing parameter region where the overall interference on a given channel decreases when $\lambda_{0}$ is increased, thus enabling users to achieve higher data rates (due to the decreased interference on the channel). 
Nonetheless, in the presence of much higher transmission costs, the radiated power of \acp{SU} is too low to carry any significant amount of information, thus leading to a decrease in achievable throughput.

We also show the impact of different system configurations on the achievable \ac{SU} performance by plotting the users' average sum-rate at equilibrium for different values of the system's \emph{congestion index}, i.e., the ratio $K/S$ between the number of \acp{SU} accessing the system and the number of available subcarriers.
As expected, networks with low congestion (i.e., $K/S={0.5,1}$) exhibit better performance than highly congested networks (i.e., $K/S=1.5$):
when there is a higher number of \acp{SU} trying to access the network, the mutual interference also increases, thus causing considerable losses in throughput and leading \acp{SU} to shut down instead of incurring high transmission costs for moderate-to-low gains in throughput.

\begin{figure*}[t!]
   \begin{minipage}{0.49\textwidth}
    \includegraphics[width=\textwidth]{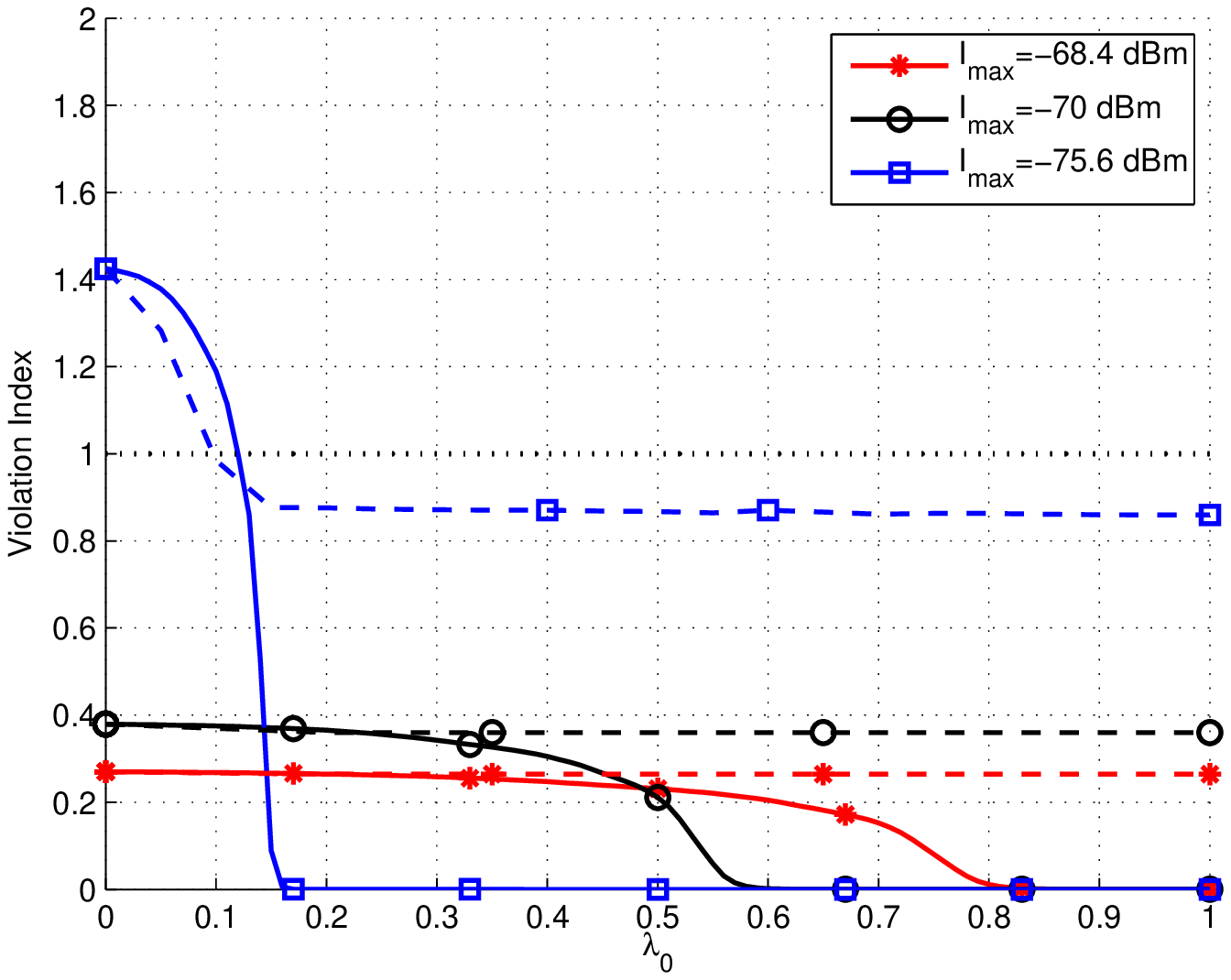}
    \vspace{-1cm}
    \caption{\label{fig:violation} Violation index as a function of $\lambda_0$ for different values of the maximum \acl{IT} level $\intf$ under the flat-rate pricing schemes (\ac{LP}: solid lines; \ac{VP}: dashed lines).}
  \end{minipage}
  \hspace{0.02\textwidth}
  \begin{minipage}{0.49\textwidth}
    \includegraphics[width=\textwidth]{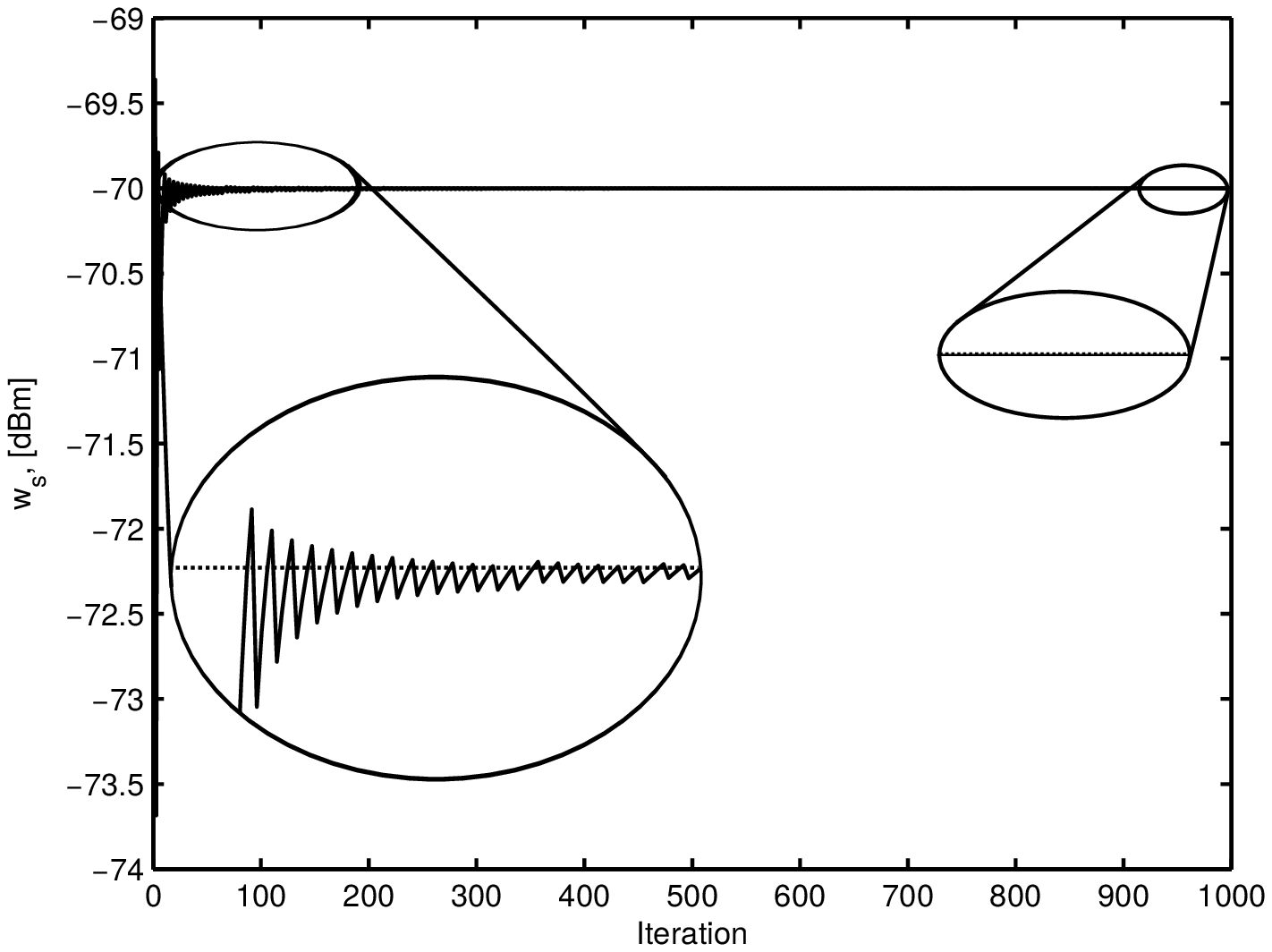}
    \vspace{-1cm}
    \caption{\label{fig:wall} Impact of the interference constraint on the evolution of the learning process under the \ac{LP} model, ($\intf_{\sub}=-70\,\textrm{dBm}$).}
  \end{minipage}
\end{figure*}

\begin{figure*}[t]
   \begin{minipage}{0.49\textwidth}
    \includegraphics[width=\textwidth]{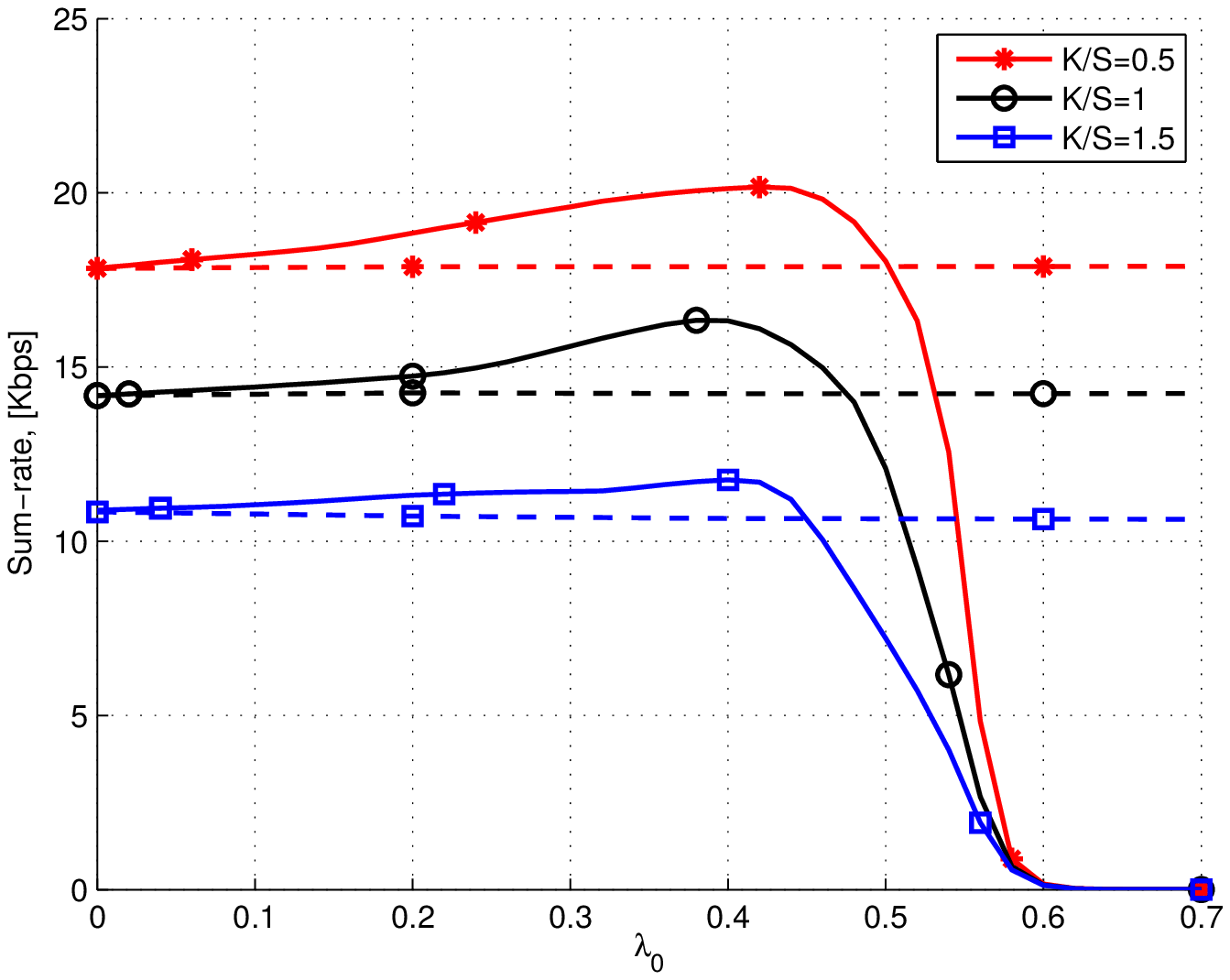}

    \caption{\label{fig:sum_rateSU} Average sum-rate as a function of different pricing models, system configurations and values of the pricing parameter $\lambda_0$ (\ac{LP}: solid lines; and \ac{VP}: dashed lines).}
  \end{minipage}
  \hspace{0.02\textwidth}
  \begin{minipage}{0.49\textwidth}
    \includegraphics[width=\textwidth]{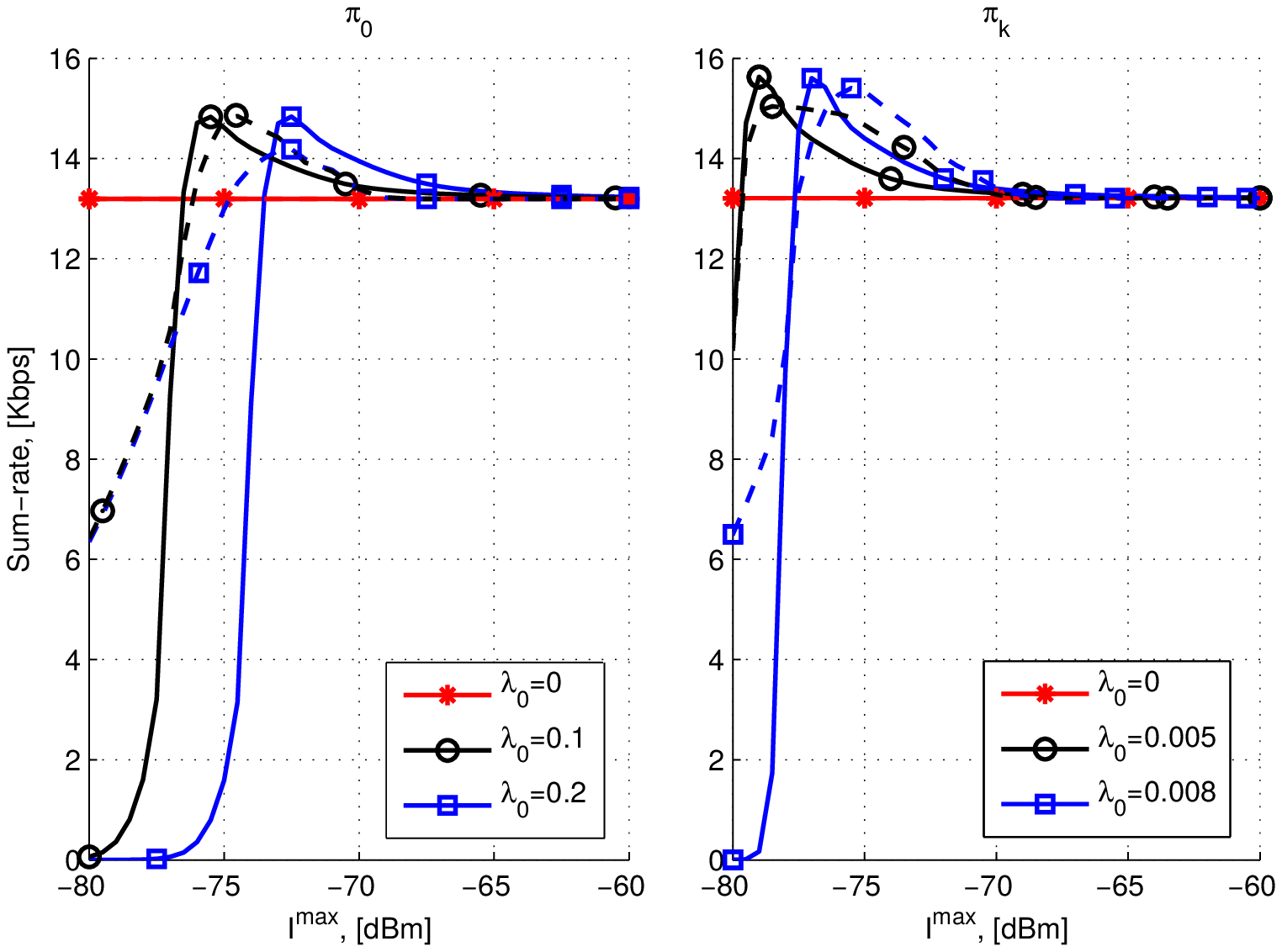}
    \vspace{-1cm}
    \caption{\label{fig:sum_rateINT} Average sum-rate as a function of the maximum interference $\intf$ at the \ac{PU} for different pricing schemes and values of the pricing parameters $\lambda_0$ under the \ac{LP} model  (\ac{LP}: solid lines; \ac{VP}: dashed lines).}
  \end{minipage}
\end{figure*}

In Fig.~\ref{fig:sum_rateINT} we illustrate how the \acp{SU}' sum-rate at equilibrium varies as a function of the \acp{PU}' interference tolerance $\intf$ for different pricing schemes (linear vs. violation pricing and flat-rate vs. per-user pricing).
Obviously, when \ac{SU} transmission comes at no cost (the $\lambda_{0}=0$ case), the value of $\intf$ does not impact the outcome of the game.
On the other hand, when $\lambda_{0} > 0$,
the \acp{SU}' average sum-rate increases as the \acp{PU}' interference tolerance increases up to a critical value $\intf_{c}$ where the \acp{SU}' sum-rate achieves its maximum value.
For any tolerance level $\intf > \intf_{c}$, the \acp{SU}' average sum-rate starts decreasing and eventually converges to a well-defined limit value as $\intf\to\infty$, corresponding to the case where the \ac{PU} is allowing free access to the leased part of the spectrum.
This occurrence is similar to what we have already discussed in Fig.~\ref{fig:sum_rateSU} and stems from the fact that low prices (small $\lambda_{0}$) and/or high interference tolerance (large $\intf$) do not provide a strong disincentive for \acp{SU} to reduce their power level;
as a result, the mutual interference across \acp{SU} also increases and leads to a decrease in the achievable performance of the secondary network.
\PM{Would it be possible to use the same scale for the y-axis in Fig.~\ref{fig:sum_rateINT}?
Also, I changed slightly the end of this paragraph.}
\SD{Done!}
Importantly, when $\intf$ is relatively low, the \ac{LP} and \ac{VP} models exhibit different behaviors, illustrated by the fact that the \acp{SU}' sum-rate at equilibrium differs.
By contrast, \eqref{eq:LP} and \eqref{eq:VP} both tend to zero as $\intf\to\infty$, so their behavior for very large $\intf$ is similar and the system converges to the same sum-rate value.

\PM{Added the following paragraph to try to explain the maximum that we see.}
The observed sum-rate maximum for intermediate values of $\intf$ can be explained as follows:
in the intolerant regime (small $\intf$), users hardly transmit at all because of the \acp{PU}' strict \ac{QoS} requirements;
on the other hand, in the ``open network'' regime (large $\intf$), each user selfishly transmits at maximum power in order to maximize his individual throughput (since there is no cost balancing factor), thus increasing interference and reducing the users' sum-rate (in a manner similar to the classical prisoner's dilemma).
As a result, the \acp{SU}' sum-rate is maximized for an intermediate value of $\intf$ where \acp{SU} have to control their power in order to avoid being charged for \ac{IT} violations:
in other words, a proper choice of $\intf$ (or, equivalently, $\lambda_{0}$) allows \acp{SU} to achieve a state which is both unilaterally stable and Pareto efficient (in the sense described above).



\SD{1) I changed the comments to Fig.~\ref{fig:sum_rateINT} and 2) I guess that such behavior derives first from the simulated environment which is really prohibitive for SUs 
(i.e., channel gains are really lossy, probably we would not see the same under a different simulation setup)
and from Fig.~\ref{fig:sum_rateSU} (even Fig. \ref{fig:rev_rate} has a similar behavior). In fact, the price that SUs pay for their interference
is $\propto \lambda_0 /\intf$, thus the price increases by either increasing $\lambda_0$, or decreasing $\intf$.
Basically, what you do is to take Fig.~\ref{fig:sum_rateSU} and reflect it in front of a mirror.
Here we have a similar (not the same, note that Fig.~\ref{fig:sum_rateSU} does not give us any info on the relationship between $\lambda_0$ and $\intf$ as we have a fixed $\intf=70\,\textrm{dBm}$, so we lack some information) 
behavior but inverted due to the relation between $\lambda_0$ and $\intf$.
Here it should hold that low maximum interference levels (or high values of $\lambda_0$) push SUs in turning off their transmitters (i.e., low rates).
By increasing (i.e., relaxing) $\intf$ (or decreasing the cost $\lambda_0$) SUs start transmitting some data as they experience reasonable costs. 
They reach a maximum sum-rate where the trade-off between interference cost, achieved sum-rate and mutual interference
is optimal, after that maximum point, SUs start increasing the radiated power which causes performance losses as the channel start being noisy, i.e., the mutual interference is high.
This is my personal interpretation of the phenomenon. However, we should discuss about it before the submission. Do you all have any other explanation?}
\PM{I took a crack at it and added a paragraph above your comment, let me know what you think.}
\SD{Definitely agree}


Finally, in Fig. \ref{fig:sum_rateINT} we also investigate the difference between flat-rate pricing ($\price_0$) and per-user pricing ($\price_k$) models. 
Both models exhibit similar properties, but for noticeably different values of $\lambda_{0}$:
specifically, to achieve the same sum-rate under per-user pricing, lower values of $\lambda_{0}$ should be considered, because users are much more sensitive to the value of $\lambda_{0}$ in the per-user paradigm.

%

In Fig. \ref{fig:rev_rate} we illustrate the transmission rate and revenue achieved by the \ac{PU} as a function of the pricing parameter $\lambda_0$ for different values of $\intf$ under the \ac{LP} and \ac{VP} schemes.
Specifically, the \ac{PU}'s sum-rate is calculated as
\begin{equation}
\rate_{\PU}(\bw)
	=\sum_{\sub\in\subs} \log\left( 1+\sinr^{\PU}_{\sub}(w_{\sub})  \right),
\end{equation}
where
$\sinr^{\PU}_{\sub}(w_{\sub})= g^{\PU} P^{\PU}/w_{\sub}$ is the \ac{PU}'s \ac{SINR} on the $\sub$-th subcarrier, and $g^{\PU}$ and $P^{\PU}$ denote the \ac{PU}'s channel gain and transmit power, respectively;
by the same token, the revenue of the \ac{PU} is simply $K\price_{0} + \insum_{k} \price_{k}$, i.e., the sum of the charges paid by the \acp{SU}.
For comparison purposes, we have fixed three different values of the parameter $\intf$ according to different \ac{PU} minimum data rate requirements (cf. Table \ref{table:QoS_setup}).

Importantly, as far as the \ac{LP} model is concerned, Fig. \ref{fig:rev_rate} shows that a high pricing parameter $\lambda_0$ brings \emph{no} revenue to the \ac{PU} because it acts as a severe transmission disincentive to the \acp{SU} (cf. Fig.~\ref{fig:sum_rateSU}, where we saw that \acp{SU} shut down beyond a certain threshold value $\lambda_{0}^{\ast}$).
Because of this behavior, there exists a critical value $\lambda_0^{c}$ for the pricing parameter that maximizes the \acp{PU}' revenue:
the calculation of this critical value lies beyond the scope of this paper, but it is evident that $\lambda_{0}^{c}$ increases when the maximum tolerable interference $\intf$ imposed by \acp{PU} also increases.
On the other hand, the \acp{PU}' revenue under the \ac{VP} model is almost always zero (or close to zero):
the reason for this is that the \ac{VP} model acts as a soft barrier (which hardens in the large $\lambda_{0}$ limit), so users tend to respect the \acp{PU}' requirements and thus incur no transmission-related penalties.
In other words, we see that if the \ac{PU}'s \ac{QoS} requirements are not too sharp, then the \ac{LP} model acts as a good source for revenue;
otherwise, if the \ac{PU}'s rate requirements are tight, the \ac{VP} model guarantees that \acp{SU} will respect them but does not generate any income.
Also, note that under both the \ac{LP} and \ac{VP} models, the rate of the \ac{PU} is always equal or higher than his minimum required data rate (dotted lines).
This is an important result that shows that pricing regulates the \acp{SU}' behavior indirectly (based on the \ac{PU}'s \ac{QoS} requirements and revenue targets), simply by fine-tuning the exact pricing model and its parameters (e.g., $\lambda_{0}$).

\begin{figure}[t]
\centering
\includegraphics[width=.45\columnwidth]{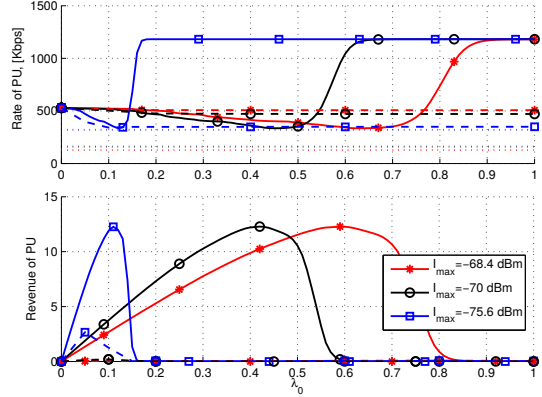}
\caption{Sum-rate and revenue of the \ac{PU} and total transmitting power of \ac{SU}s as a function of $\lambda_0$ for different values of the maximum \acl{IT} level $\intf$ under different pricing schemes (\ac{LP}: solid lines; \ac{VP}: dashed lines; Minimum data rate: dotted lines).}
\label{fig:rev_rate}
\end{figure}

Figs. \ref{fig:comparison1}\textendash\ref{fig:comparison3} compare the performance of the proposed power allocation scheme to the benchmark case of uniform power allocation \textendash\ i.e., when \acp{SU} transmit at full power and allocate their power uniformly over the available subcarriers, irrespective of the \ac{PU}'s requirements.
For some values of $\lambda_{0}$, the \acp{SU}' sum-rate under uniform power allocation is higher than the one achieved by the proposed approach, but this comes at the expense of violating the \ac{PU}'s minimum \ac{QoS} requirements (which constitutes a contractual breach from the \ac{PU}'s perspective);
on the contrary, our approach always respects the \ac{PU}'s contractual requirements (since the $\lambda_{0}$ pricing parameter is negotiated with the \ac{PU}), while guaranteeing high throughput to the \acp{SU}.
This is seen in Fig. \ref{fig:comparison2}:
the \ac{PU}'s throughput exceeds the throughput achieved when \acp{SU} employ a uniform power allocation policy, except when the \ac{PU} has no significant \ac{QoS} requirements ($\intf\to\infty$), in which case the \acp{SU} exploit all the available spectrum and the \ac{PU}'s rate is reduced.
Furthermore, in Fig. \ref{fig:comparison3} we illustrate the normalized revenue of the proposed approach w.r.t. the revenues generated by uniform power allocation policies.
\PM{Hmmm, I'm still not sure about the word ``efficiency'', I think it would be better to say ``normalized revenue'' or something along those lines.}
\SD{Fixed!}
Note that the income generated by the proposed approach is up to $3\times$ higher than the income generated by \acp{SU} that are not cost-/energy-aware and transmit naïvely at full power, using a uniform power allocation policy.%
\footnote{Recall here that the \ac{VP} model does not generate any revenue so, to reduce clutter, the corresponding curves are not shown.}
Thus, by fine-tuning his pricing scheme, the \ac{PU} not only achieves his \ac{QoS} requirements, but also increases his monetary revenue against cost-aware \acp{SU}.
%
\begin{figure*}[t]
\centerline{
\subfigure[]{
\includegraphics[width=.3\textwidth]{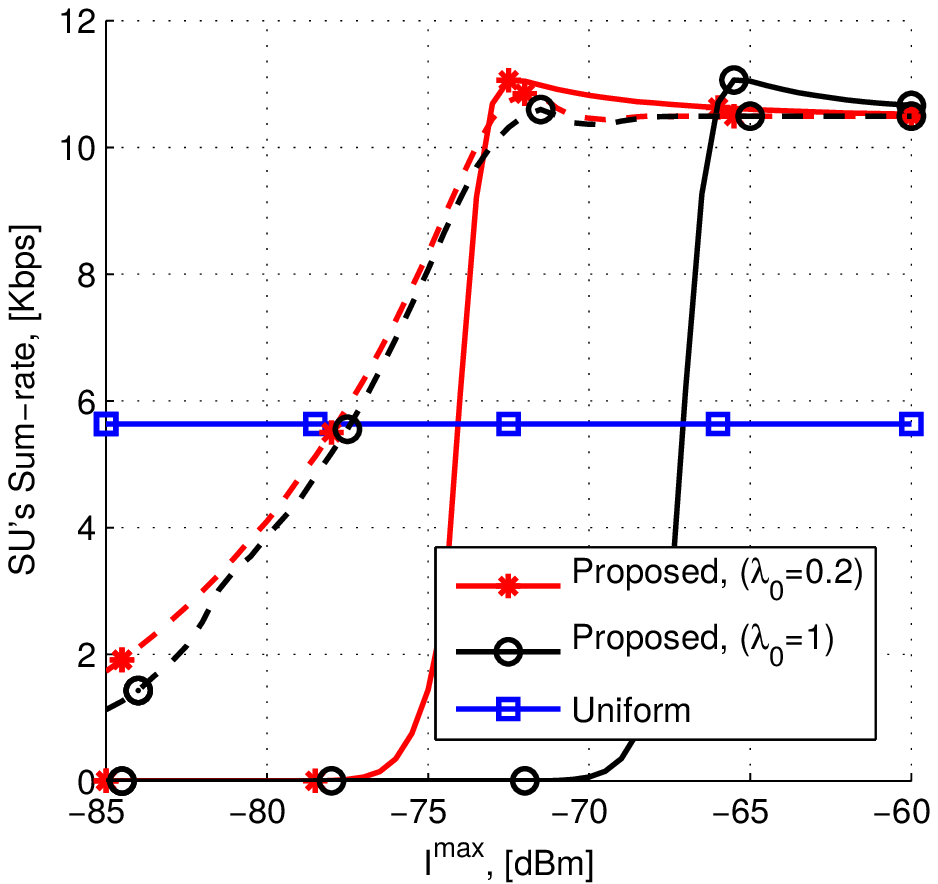}\label{fig:comparison1}
}
\hfil
\subfigure[]{
\includegraphics[width=.3\textwidth]{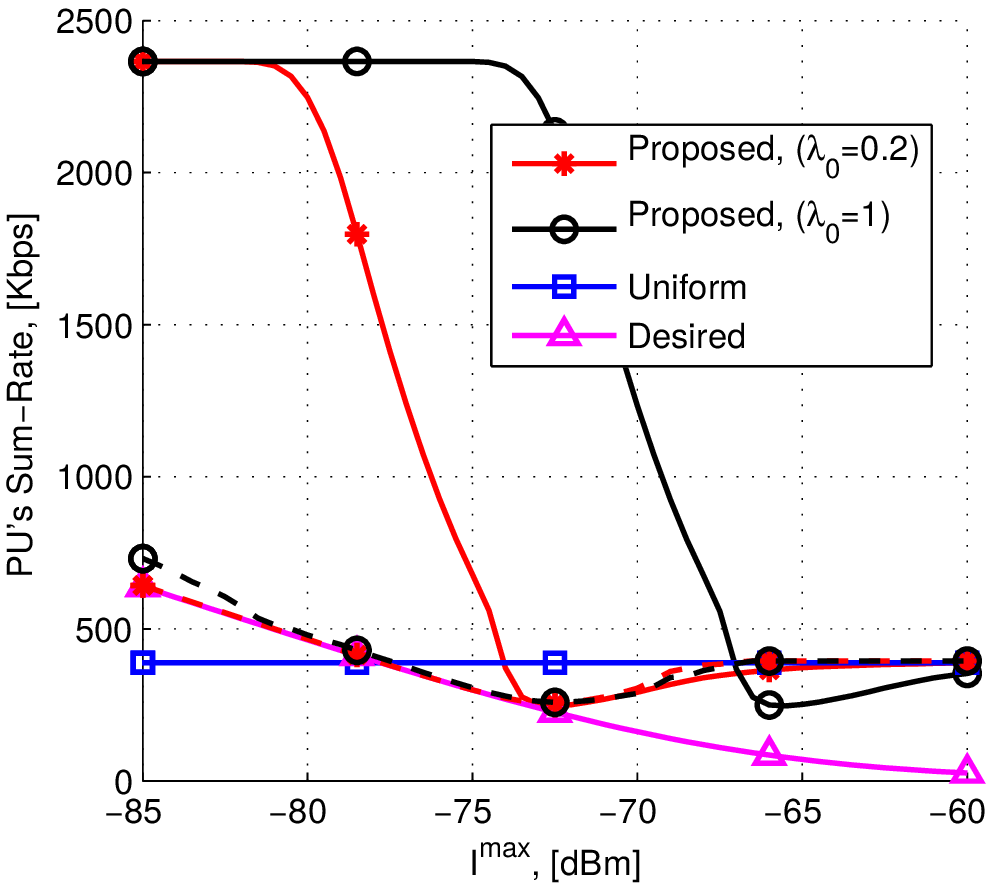}\label{fig:comparison2}
}
\hfil
\subfigure[]{
\includegraphics[width=.3\textwidth]{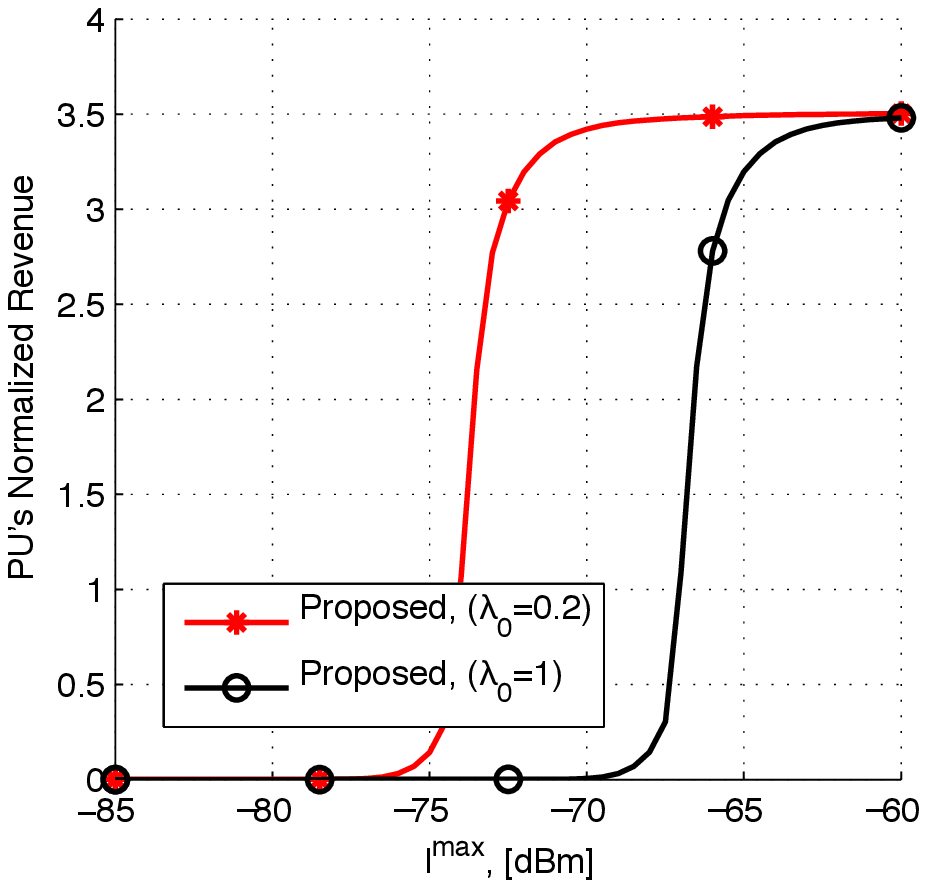}\label{fig:comparison3}
}}
\caption{Comparison between the proposed and uniform power allocation approaches: a) Sum-rate of \acp{SU}; b) Sum-rate of the \ac{PU}; c) Normalized revenue of the proposed approach w.r.t. the uniform power allocation policy (\ac{LP}: solid lines with star and circle markers; \ac{VP}: dashed lines with star and circle markers).}
\end{figure*}

In Figs.~\ref{fig:convergence} and \ref{fig:scalability}, we investigate the length of the system's off-equilibrium phase and the convergence rate of the proposed distributed learning scheme (Algorithm \ref{alg:XL}).
By Theorem \ref{thm:conv}, the iterations of Algorithm \ref{alg:XL} converge to \acl{NE} when using a step-size sequence $\step_{n}$ such that $\sum_{j=1}^{n}\step_{j}^{2} \big/ \sum_{j=1}^{n} \step_{j} \to 0$ as $n\to\infty$. 
As discussed in \cite{DMMP14}, a rapidly decreasing step-size sequence slows down the algorithm, so we examine here the usage of a fixed step size to accelerate convergence.
This choice makes the algorithm run faster;
on the other hand, a fixed step-size may lead to unwanted oscillations around the equilibrium point, thus interfering with the algorithm's end-state.
To account for this, we employ an adaptive \ac{STC} approach \cite{DM92}:
we start with a large, constant step-size which is then decreased as soon as oscillations are detected.%
\footnote{Note that such a step-size schedule still satisfies the summability postulates of Theorem \ref{thm:conv}.}
By means of this approach, Algorithm \ref{alg:XL} is very aggressive during the first non-oscillating iterations and it becomes more conservative (thus guaranteeing convergence) once oscillations are noticed.

To assess the method's efficiency, we plotted the system's \ac{EQL} defined as:
\begin{equation}
\eql(n)
	= \frac{\pot_{n}-\pot_{\min}}{\pot_{\max}-\pot_{\min}}
\end{equation}
\noindent
where $\pot_{n} \equiv \pot(\bp(n))$ is the potential \eqref{eq:potential} of the game at the $n$-th iteration of the algorithm, and $\pot_{\min}$ ($\pot_{\max}$) is the minimum (maximum) value of $\pot$;
obviously, an \ac{EQL} value of $1$ means that the system is at Nash equilibrium.
Accordingly, in Fig. \ref{fig:convergence}, we show the evolution of the \ac{EQL} and the system's sum-rate at each iteration for different step-size rules and interference pricing models.
As expected, a conservative step-size of the form $\step_{n} = n^{-\beta}$, $1/2<\beta<1$, leads to relatively slow convergence (of the order of several tens of iterations or worse).
On the other hand, the use of \ac{STC} and fixed-step methods greatly accelerates the users' learning rate:
after only a few \ac{STC} iterations the system's \ac{EQL} exceeds 90$\%$, and the algorithm's convergence is accelerated even further by increasing the constant step-size in the ``exploration'' phase of the \ac{STC} method.

\begin{figure*}[t!]
   \begin{minipage}{0.49\textwidth}
    \includegraphics[width=\textwidth]{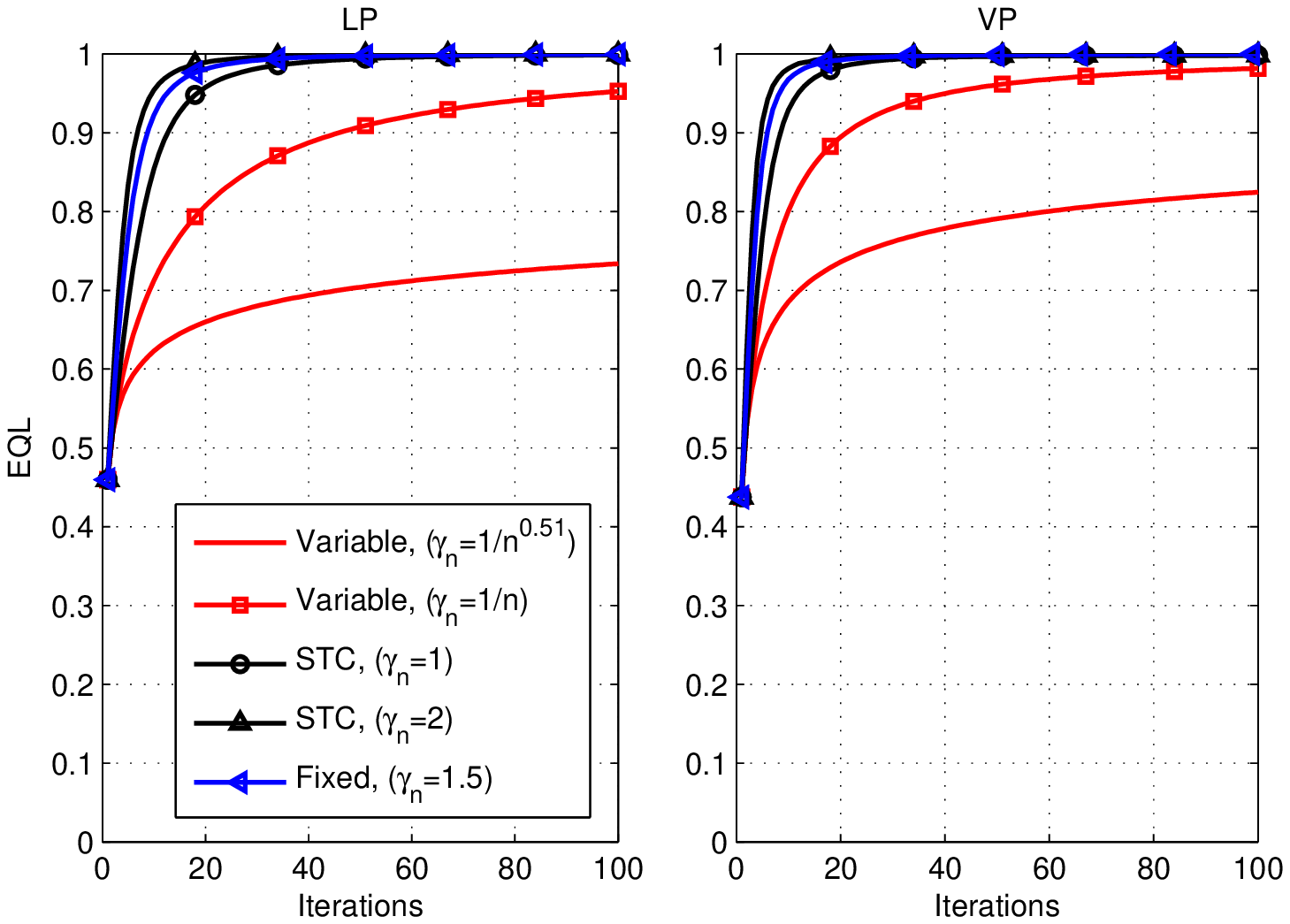}
    \vspace{-1cm}
    \caption{\label{fig:convergence}Equilibration level, $\eql(n)$, for different step-size rules under and flat-rate interference pricing models.}
  \end{minipage}
  \hspace{0.02\textwidth}
  \begin{minipage}{0.49\textwidth}
    \includegraphics[width=\textwidth]{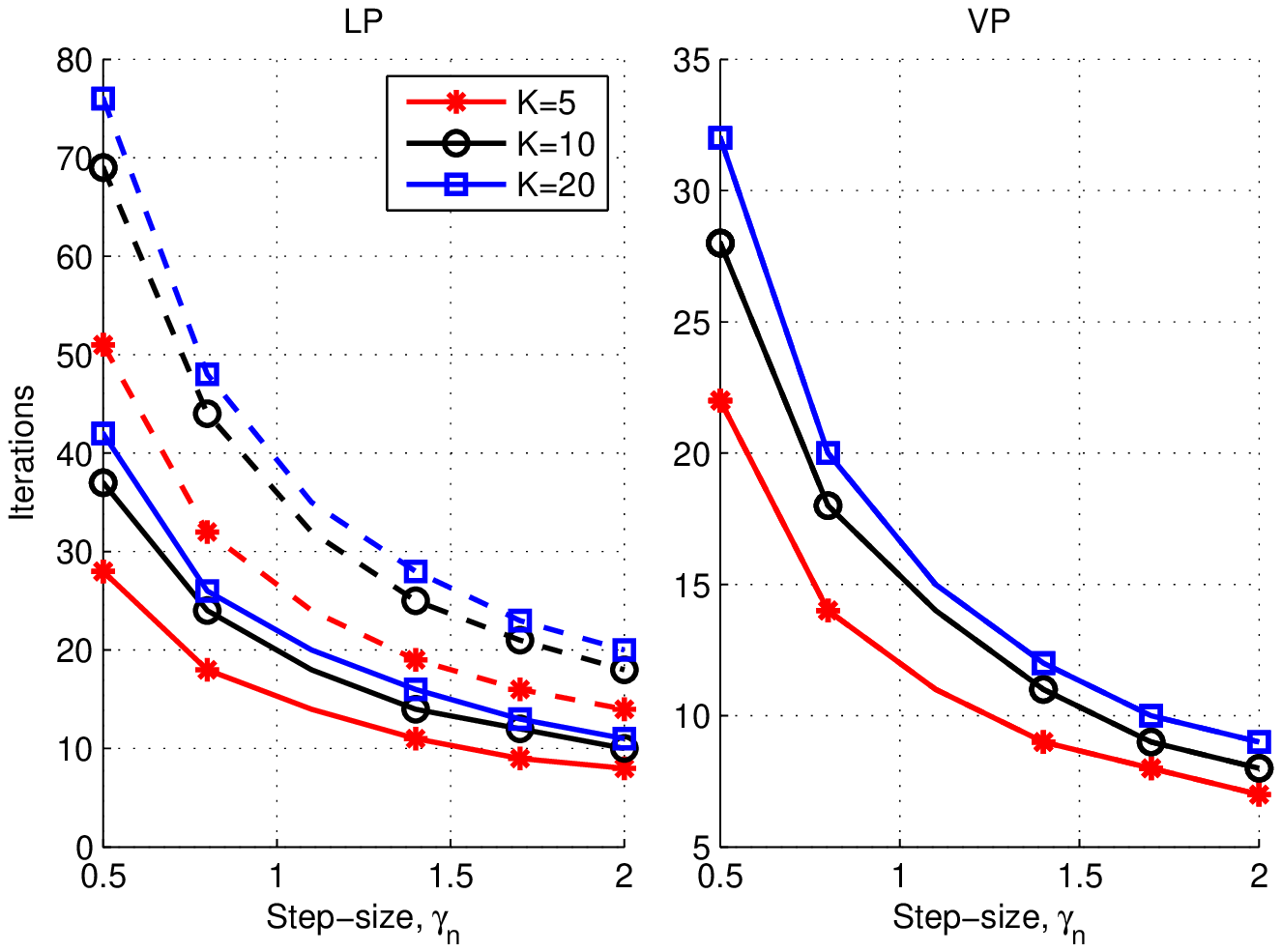}
    \vspace{-1cm}
    \caption{\label{fig:scalability}Scalability of the proposed learning scheme as a function of the step-size $\step$ for different values of the number $K$ of \ac{SU}s and pricing schemes ($\lambda_0=0.1$: solid lines; $\lambda_0=0.5$ dashed lines).}
  \end{minipage}
\end{figure*}

To investigate the scalability of the proposed learning scheme, we also examine the algorithm's convergence speed for different numbers of \acp{SU}.
In Fig. \ref{fig:scalability} we show the number of iterations needed to reach an \ac{EQL} of $95\%$:
importantly, by increasing the value of the algorithm's step-size, it is possible to reduce the system's transient phase to a few iterations, even for large numbers of users.
Moreover, we also note that the algorithm's convergence speed in the \ac{LP} model depends on the pricing parameter $\lambda_0$ (it decreases with $\lambda_{0}$), whereas this is no longer the case under the \ac{VP} model.
The reason for this is again that the \ac{VP} model acts as a ``barrier'' which is only activated when the \acp{PU}' interference tolerance is violated.

Finally, to investigate the impact of mobility and channel fading on the users' learning process, we consider a system with three \acp{SU} ($K=3$) and three \ac{iid} Gaussian fast-fading orthogonal subcarriers ($S=3$).
In Fig. \ref{fig:ergodic}, we plot the system's \ac{EQL} with respect to the ergodic potential \eqref{eq:potential-erg} under the \ac{LP} model as a function of different price settings and step-size rules.
Remarkably, even in this stochastic setting, Algorithm \ref{alg:XL} still converges to the game's \ac{NE} in a few iterations and, as before, the algorithm's convergence rate is improved by choosing more aggressive step-size sequences.

\begin{figure}[t]
\centering
\includegraphics[width=.45\columnwidth]{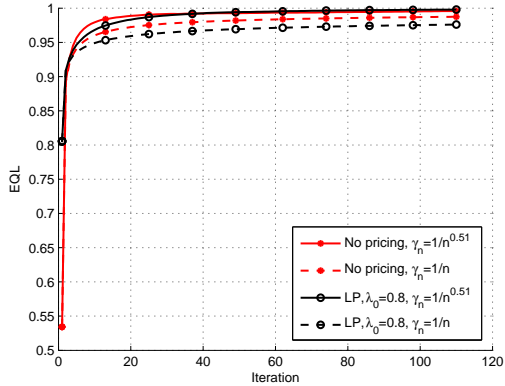}
\caption{Equilibration level (\ac{EQL}) for different values of the pricing parameter $\lambda_0$ and step-size rules under the fast-fading regime.}
\label{fig:ergodic}
\end{figure}

\section{Conclusions}
\label{sec:conclusions}

In this paper, we considered a game-theoretic formulation of the problem of cost-efficient throughput maximization in \acl{MC} \ac{CR} networks where \acp{SU} are charged based on the interference that they cause to the system's \acp{PU}.
We showed that the resulting game admits a unique \acl{NE} under fairly mild conditions (and for both static and ergodic channels), and we derived a fully distributed learning algorithm that converges to equilibrium using only local \ac{SINR} and channel measurements (and, again, under both static and fast-fading channel conditions).
Our analysis shows that the choice of the exact pricing scheme has a strong impact on the network's achievable performance (for both licensed and unlicensed users):
in the ``soft-pricing'' regime, the \acp{PU}' requirements are violated in exchange for monetary reimbursement;
by contrast, higher prices safeguard the \acp{PU}' requirements, but (somewhat surprisingly) generate no revenue to the \acp{PU}.
Moreover, thanks to the fast convergence of the proposed algorithm, the system's transient (off-equilibrium) phase is minimized, so \acp{SU} avoid being unduly uncharged for relatively low throughput levels.

Some important questions that remain is the behavior of the system under arbitrarily time-varying channel conditions corresponding to more general fading models (not necessarily following a stationary ergodic process), and the case of imperfect \ac{SINR} measurements and channel knowledge at the transmitter.
We intend to explore these directions in future work.
\appendix[Technical Proofs]
\label{app:proofs}

\subsection{Equilibrium analysis}
\label{app:proofs-equilibrium}


\begin{IEEEproof}[Proof of Theorem \ref{thm:equilibrium}]
We will first show that the game's potential $\pot$ is strictly concave under assumption (A1) (i.e., if $\price_{k}$ is strictly increasing in each of its arguments).
To that end, let $\pot_{0} = \insum_{\sub} \log(\noisevar_{\sub} + w_{\sub}) - \price_{0}$, $\pot_{+} = -\insum_{k}\price_{k}$ and differentiate $\pot = \pot_{0} + \pot_{+}$ to obtain:
\begin{equation}
\label{eq:dV}
\frac{\pd\pot}{\pd p_{k\sub}}
	= \frac{\pd\pot_{0}}{\pd p_{k\sub}} + \frac{\pd\pot_{+}}{\pd p_{k\sub}}
	= \frac{\pd\pot_{0}}{\pd w_{\sub}} g_{k\sub}
	- \frac{\pd\price_{k}}{\pd p_{k\sub}},
\end{equation}
and hence:
\begin{equation}
\label{eq:HessV}
\frac{\pd^{2}\pot}{\pd p_{k\sub}\,\pd p_{\ell\sub'}}
	= g_{k\sub} g_{\ell\sub'} \frac{\pd^{2}\pot_{0}}{\pd w_{\sub}\,\pd w_{\sub'}}
	- \frac{\pd^{2}\price_{k}}{\pd p_{k\sub}\,\pd p_{k\sub'}} \delta_{k\ell}
	= - g_{k\sub} g_{\ell\sub'} A_{\sub\sub'}^{0}
	- \delta_{k\ell} B_{\sub\sub'}^{k},
\end{equation}
where, in obvious notation:
\begin{equation}
\label{eq:ABdef}
A_{\sub\sub'}^{0}
	= - \frac{\pd^{2}\pot_{0}}{\pd w_{\sub}\,\pd w_{\sub'}}
	\quad
	\text{and}
	\quad
B_{\sub\sub'}^{k}
	= \frac{\pd^{2}\price_{k}}{\pd p_{k\sub}\,\pd p_{k\sub'}}.
\end{equation}

Since $\pot_{0}$ is strictly concave in $\bw$ (as the sum of a strictly concave function and a concave function), it follows that $\{A_{\sub\sub'}^{0}\}$ is positive-definite.
Accordingly, since $A_{\sub\sub'}^{0}$ does not depend on $k$, any zero eigenvector $\bz\in\R^{KS}$ of the $KS\times KS$ matrix $g_{k\sub} g_{\ell\sub'} A_{\sub\sub'}^{0}$ must satisfy:
\begin{equation}
\label{eq:degeneracy}
\insum_{k} g_{k\sub} z_{k\sub}
	= 0
	\quad
	\text{for all $\sub\in\subs$}.
\end{equation}
The degeneracy condition \eqref{eq:degeneracy} reflects the fact that if $\bw(\bp') = \insum_{k} g_{k\sub} p_{k\sub}' = \insum_{k} g_{k\sub} p_{k\sub} = \bw(\bp)$ for two power profiles $\bp,\bp'\in\strat$, then $\pot_{0}(\bp) = \pot_{0}(\bp')$;
Eq.~\eqref{eq:degeneracy} shows in addition that $\pot_{0}$ admits no other directions along which it is constant.
From this, it follows that the kernel $Z$ of $\hess(\pot)$ is at most $S$-dimensional;
since $\argmax\pot$ lies in an affine subspace of $\R^{KS}$ that is parallel to $Z$, we conclude that the Nash set of $\game$ is a convex polytope of dimension at most $KS - S$, as claimed.

Assume now that $\eqvec$ is a Nash equilibrium of $\game$.
If there exists a subcarrier $\sub\in\subs$ such that $\eq_{k\sub} = 0$ for all $k\in\play$, then any profile with $p_{k\sub} = P_{k}$ for all $k\in\play$ 
cannot be Nash \textendash\ and vice versa.
Thus, without loss of generality (and after relabeling indices if necessary), we may assume that there exists a subcarrier $\sub\in\subs$ such that $\eq_{k\sub} < \eq_{\ell\sub}$ for two users $k,\ell\in\play$.
With this in mind, assume that every user-specific price function $\price_{k}$ is increasing in each of its arguments and consider the tangent
vector $\bz\in\R^{KS}$ with $z_{k\sub} = g_{\ell\sub}$, $z_{\ell\sub} = - g_{k\sub}$, and $z_{k'\sub'}=0$ otherwise.
By \eqref{eq:degeneracy}, it follows that
\begin{equation}
f(t)
	= \pot(\eqvec + t\bz)
\end{equation}
is constant for all sufficiently small $t\geq0$ (note that $\eqvec+t\bz\in\strat$ for small $t\geq0$).
However, by differentiating, we obtain:
\begin{equation}
\frac{df}{dt}
	= \frac{d}{dt} \left[ \pot_{0}(\eqvec + tz) - \txs\insum_{k'} \price_{k'}(\eqvec_{k'} + t\bz_{k'}) \right]
	= -\frac{\pd\price_{k}}{\pd p_{k\sub}} z_{k\sub} - \frac{\pd\price_{\ell}}{\pd p_{\ell\sub}} z_{\ell\sub}
	= g_{k\sub} \frac{\pd\price_{\ell}}{\pd p_{\ell\sub}} - g_{\ell\sub} \frac{\pd\price_{k}}{\pd p_{k\sub}},
\end{equation}
so we must have
\begin{equation}
g_{k\sub} \left. \frac{\pd \price_{\ell}}{\pd p_{\ell\sub}} \right\vert_{\eqvec+t\bz}
	= g_{\ell\sub} \left. \frac{\pd\price_{k}}{\pd p_{k\sub}} \right\vert_{\eqvec+t\bz}
	\quad
	\text{for all sufficiently small $t\geq0$.}
\end{equation}
With $\price_{k}$, $\price_{\ell}$ strictly increasing, this only holds if $\price_{k}$ (resp. $\price_{\ell}$) is linear in $p_{k\sub}$ (resp. $p_{k\sub}$) and the channel gain coefficients $g_{k\sub}$, $g_{\ell\sub}$ have the required ratio.
This last condition is a (Lebesgue) measure zero event, so our assertion follows.

Otherwise, assume that (A2) holds, implying in particular that $\frac{\pd\pot_{0}}{\pd w_{\sub}} = (\noisevar_{\sub} + w_{\sub})^{-1} - \frac{\pd\price_{0}}{\pd w_{\sub}}$ maintains the same sign for all possible values of $w_{\sub}$.
Then, in view of the previous discussion, it suffices to prove uniqueness in the special case where the price functions $\price_{k}$ are constant in a neighborhood of $\eqvec$.
In this case, the first order \ac{KKT} conditions for \eqref{eq:potential-max} take the form:
\begin{subequations}
\label{eq:KKT}
\begin{flalign}
a)\quad
	& r_{\sub} g_{k\sub} - \lambda_{k} \leq 0,
	\\
b)	\quad
	&p_{k\sub} \left[r_{\sub} g_{k\sub} - \lambda_{k} \right] = 0,
\end{flalign}
\end{subequations}
where $\lambda_{k}$ is the Lagrange multiplier corresponding to the total power constraint $\insum_{\sub} p_{k\sub} \leq P_{k}$ and
\begin{equation}
r_{\sub}
	= \left(\frac{1}{\noisevar_{\sub} + w_{\sub}} - \frac{\pd\price_{0}}{\pd w_{\sub}}\right)^{-1}.
\end{equation}
Thus, with $r_{\sub} \neq 0$ by assumption, we obtain:
\begin{equation}
\frac{g_{k\sub}}{g_{k\sub'}}
	= \frac{r_{\sub}}{r_{\sub}'}
	\quad
	\text{for all $\sub,\sub'\in\supp(\eqvec_{k})$},
\end{equation}
i.e., every user $k\in\play$ is ``load-balancing'' the quantity $g_{k\sub}/r_{\sub}$ over all employed subcarriers.

By using a graph-theoretic method introduced in \cite{MBML11}, we may deduce that the following hold except on a set of (Lebesgue) measure zero;
indeed:
\begin{enumerate}
\item
No two users $k,\ell\in\play$ can be using the same two subcarriers $\sub,\sub'$ at equilibrium:
if this were the case, we would have $g_{k\sub}/g_{k\sub'} = g_{\ell\sub}/g_{\ell\sub'}$, a measure zero event.
\item
There is at most $S-1$ instances of users employing more than one subcarrier.
Indeed, assume that user $k_{j}$ employs subcarriers $\sub_{j},\sub_{j}'$, with $j=1,\dotsc,N$, $N\geq S$.
Then, by the pigeonhole principle, there exists a subset of pairs $(\sub_{j},\sub_{j}')$ that forms a cycle of length $L\geq N$ in the graph with vertex set $\set$.
Hence, by relabeling indices if necessary, we obtain the cycle relation:
\begin{equation}
\frac{g_{k_{1},\sub_{1}}}{g_{k_{2},\sub_{2}}}
	\frac{g_{k_{2},\sub_{2}}}{g_{k_{3},\sub_{3}}}
	\dotsm
	\frac{g_{k_{L-1},\sub_{L-1}}}{g_{k_{L},\sub_{L}}}
	= \frac{r_{\sub_{1}}}{r_{\sub_{2}}}
	\frac{r_{\sub_{2}}}{r_{\sub_{3}}}
	\dotsm
	\frac{r_{\sub_{L-1}}}{r_{\sub_{L}}}
	= 1,
\end{equation}
where we have used the fact that $\sub_{1} = \sub_{L}$.
This represents a measure zero condition, so our assertion follows.
\end{enumerate}

The above shows that $\eqvec$ lies in the interior of a face of $\strat$ with dimension at most $S-1$.
Since the Nash set of $\game$ is a convex polytope of dimension $KS-S$, we conclude that any Nash equilibrium lies at the intersection of a $g$-independent $(S-1)$-dimensional and a $g$-dependent $(KS-S)$-dimensional subspace of $\R^{KS}$.
However, since $KS-S+S-1 < KS$, the intersection of these subspaces is trivial on a set of full (Lebesgue) measure with respect to the choice of the $g$-dependent subspace, implying that there exists a unique Nash equilibrium.
\end{IEEEproof}

\subsection{Convergence of exponential learning}
\label{app:proofs-convergence}

The basic idea of our convergence proof is as follows:
we will first show that the iterates of Algorithm \ref{alg:XL} track (in a certain sense that will be made precise below) the ``mean-field'' dynamics:
\begin{equation}
\label{eq:XL-cont}
\begin{aligned}
\dot \scorevec_{k}
	&= \payvec_{k}(\bp),
	\\
p_{ks}
	&= P_{k} \frac{\exp(y_{ks})}{1 + \insum_{s'\in\subs} \exp(y_{ks'})}.
\end{aligned}
\end{equation}
Theorem \ref{thm:conv} will then follow by showing that the dynamics \eqref{eq:XL-cont} converge to the maximum set of the game's potential (and, hence, to \acl{NE}) for any itial condition $\scorevec(0)$.

For simplicity, in the rest of this appendix (and unless explicitly stated otherwise), we will work with a single user with maximum transmit power $P=1$;
the general case is simply a matter of taking a direct sum over $k\in\play$ and rescaling by the corresponding maximum power $P_{k}$ of each user.
With this in mind, let $\varsimplex = \{\bp\in\R_{+}^{\subs}: 0 \leq \insum_{s} p_{s} \leq 1\}$ denote the standard $S$-dimensional ``corner-of-cube'',%
\footnote{Recall that each user's action space is a corner-of-cube.}
and consider the entropy-like function:
\begin{equation}
\label{eq:penalty}
h(\bp)
	= \insum_{s} p_{s} \log p_{s}
	+ \left(1 - \insum_{s} p_{s}\right) \log\left(1 - \insum_{s} p_{s}\right).
\end{equation}
A key element of our proof will be the associated \emph{Bregman divergence} \cite{Bre67,Kiw97}:
\begin{flalign}
\label{eq:Bregman}
\breg(\eqvec,\bp)
	= h(\eqvec) - h(\bp) - \braket{\nabla_{\bp}h}{\eqvec-\bp}
	= \insum_{s} \eq_{s} \log \frac{\eq_{s}}{p_{s}}
	+ \left(1 - \insum_{s} \eq_{s}\right) \log\frac{1 - \insum_{s} \eq_{s}}{1 - \insum_{s} p_{s}},
\end{flalign}
with the continuity convention $0\log0 = 0$.
The Bregman divergence \eqref{eq:Bregman} resembles the well known \ac{KL} divergence in the same sense that $h$ resembles the ordinary Gibbs\textendash Shannon entropy:
in particular, by exploiting the properties of the \ac{KL} divergence, it is easy to see that $\breg(\eqvec,\bp)\geq0$ for all $\eqvec,\bp\in\varsimplex$, with equality if and only if $\bp = \eqvec$;
in this sense, $\breg(\eqvec,\bp)$ provides an oriented distance measure between $\eqvec$ and $\bp$ in $\varsimplex$.

Employing the Bregman divergence, we can prove the following convergence result:

\begin{proposition}
\label{prop:conv-cont}
Every solution orbit $\bp(t)$ of the dynamics \eqref{eq:XL-cont} converges to Nash equilibrium in $\game$.
\end{proposition}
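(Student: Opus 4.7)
The plan is to use the Bregman divergence $\breg$ as a Lyapunov function for the dynamics \eqref{eq:XL-cont}. First, I would establish the conjugate/duality relation: by direct computation, $\nabla h(\bp) = \scorevec$ under the softmax/Gibbs map, since $\partial h/\partial p_{s} = \log p_{s} - \log(1 - \sum_{s'} p_{s'}) = y_{s}$ whenever $\bp$ and $\scorevec$ are related by \eqref{eq:XL-cont}. This provides the crucial link between the primal variables $\bp$ and the dual score variables $\scorevec$, and in particular gives $\dot{\scorevec}_{k} = \nabla^{2} h(\bp_{k})\,\dot{\bp}_{k}$.

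Second, for any fixed Nash equilibrium $\eqvec\in\strat$, I would differentiate the total Bregman divergence $H(t) = \sum_{k} \breg(\eqvec_{k}, \bp_{k}(t))$ along a trajectory. Expanding $\breg$ via \eqref{eq:Bregman} and using the conjugate relation, the terms involving $\product{\scorevec_{k}}{\dot{\bp}_{k}}$ cancel and one is left with
\begin{equation*}
\frac{d}{dt} H(t) = -\sum_{k} \product{\dot{\scorevec}_{k}}{\eqvec_{k} - \bp_{k}(t)} = \sum_{k} \product{\payvec_{k}(\bp(t))}{\bp_{k}(t) - \eqvec_{k}} = \product{\nabla\pot(\bp(t))}{\bp(t) - \eqvec},
\end{equation*}
where the last equality uses the potential property $\payvec_{k} = \nabla_{k} \pot$ from Proposition \ref{prop:potential}. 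By concavity of $\pot$ (established in the proof of Theorem \ref{thm:equilibrium}) and the fact that Nash equilibria coincide with global maximizers of $\pot$, this expression is bounded above by $\pot(\bp(t)) - \pot(\eqvec) \leq 0$, so $H(t)$ is non-increasing along every trajectory.

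Third, a parallel Lyapunov calculation on $\pot$ itself gives $\frac{d}{dt}\pot(\bp(t)) = \product{\payvec}{\dot{\bp}} = \payvec^{\top} J(\bp)\payvec \geq 0$, where $J(\bp) = \nabla^{2} h(\bp)^{-1}$ is the positive-definite Jacobian of the softmax map. Combining monotonicity of $H$ with the upper-boundedness of $\pot$, a contradiction argument (integrate $\dot{H} \leq \pot(\bp(t)) - \pot(\eqvec)$, use $H \geq 0$ and monotonicity of $\pot$) forces $\pot(\bp(t)) \to \max\pot$. Hence every $\omega$-limit point of $\bp(t)$ lies in the Nash set, and a standard Fej\'er-monotone argument (applied with $\eqvec$ taken to be one such $\omega$-limit point, exploiting $\breg(\eqvec,\bp) = 0 \iff \bp = \eqvec$) upgrades subsequential convergence to convergence of the full trajectory.

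The main obstacle I anticipate is handling boundary equilibria: the Gibbs map only produces strictly interior power profiles in $\varsimplex$, so if the equilibrium $\eqvec$ has zero components the divergence $\breg(\eqvec, \bp(t))$ remains finite under the convention $0\log 0 = 0$, but one must check that $H(t) \to 0$ genuinely forces $\bp(t) \to \eqvec$ rather than merely trapping $\bp(t)$ in a face at infinite score. This is resolved by restricting $h$ to the relative interior of the face carrying $\supp(\eqvec)$ and using strict convexity there; the analogous issue also explains why LaSalle's invariance principle must be applied \emph{together} with Fej\'er-monotonicity rather than alone, since the set $\{\dot{\pot} = 0\}$ may intersect the boundary of $\strat$ in spurious rest points of \eqref{eq:XL-cont} that are not Nash equilibria.
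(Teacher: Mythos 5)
Your proof is correct and follows essentially the same route as the paper's: both use the Bregman divergence $H(t)=\breg(\eqvec,\bp(t))$ as a Fej\'er-type Lyapunov function, compute $\dot H=\braket{\payvec}{\bp-\eqvec}\leq 0$ via the conjugacy $\scorevec=\nabla h(\bp)$, extract an $\omega$-limit by compactness, and then use monotonicity of $H$ to upgrade subsequential to full convergence. Your write-up is in fact slightly more complete than the paper's: the additional computation $\frac{d}{dt}\pot(\bp(t))=\payvec^{\top}\nabla^{2}h^{\ast}(\scorevec)\,\payvec\geq 0$ supplies the justification (tacit in the paper) that every $\omega$-limit point is a maximizer of $\pot$ and hence a Nash equilibrium, and it also implicitly corrects the sign typo in the paper's statement ``$\dot H\geq 0$'' (the intended and correct inequality is $\dot H\leq 0$); your worry about boundary equilibria is harmless, since Pinsker's inequality shows $\breg(\eqvec,\bp)\to 0$ forces $\bp\to\eqvec$ even when $\eqvec$ has zero components.
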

 
 \begin{proof}
 Let $\eqvec$ be a Nash equilibrium of $\game$,
and let $H(t) = \breg(\eqvec,\bp(t))$.
We then have:
\begin{equation}
\label{eq:H}
H
	= h(\eqvec)
	+ \log\left(1 + \insum_{s} e^{\score_{s}} \right)
	- \insum_{s} \eq_{s} \score_{s},
\end{equation}
and hence:
\begin{equation}
\label{eq:dH}
\dot H
	= \frac{\insum_{s} \dot\score_{s} e^{\score_{s}}}{1 + \insum_{s} e^{\score_{s}}}
	- \insum_{s} \eq_{s} \dot\score_{s}
	= \insum_{s} p_{s} \dot\score_{s}
	- \insum_{s} \eq_{s} \dot\score_{s}
	= \insum_{s} (p_{s} - \eq_{s}) \payv_{s}
	= \braket{\bp - \eqvec}{\payvec}.
\end{equation}
By concavity of $\pot$ and the fact that $\payvec = \grad_{\bp} \pay = \grad_{\bp}\pot$,
it follows that $\braket{\bp - \eqvec}{\payvec} \geq 0$ with equality holding if and only if $\bp$ is a maximizer of $\pot$ (and, hence, a Nash equilibrium of $\game$).

To show that $\bp(t)$ converges to a Nash equilibrium of $\game$, assume that $\eqvec$ is an $\omega$-limit of $\bp(t)$, i.e., $\bp(t_{n})\to\eqvec$ for some increasing sequence $t_{n}\to\infty$ (that $\bp(t)$ admits at least one $\omega$-limit follows from the fact that $\varsimplex$ is compact).
This implies that $H(t_{n}) \to0$, and since $\dot H \geq 0$, we also get $\lim_{t\to\infty} H(t) = 0$, so $\bp(t) \to \eqvec$ by the definition of the Bregman divergence.
\end{proof}

With this result at hand, we have:

\begin{proof}[Proof of Theorem \ref{thm:conv}]
We will first show that the basic recursion of Algorithm \ref{alg:XL} comprises a stochastic approximation of the dynamics \eqref{eq:XL-cont} in the sense of \cite{Ben99}.
Indeed, it is easy to see that the exponential regularization map \eqref{eq:Gibbs} is Lipschitz;
moreover, since $\varsimplex$ is compact and the game's potential function is smooth on $\varsimplex$, it follows that the composite map $\scorevec\mapsto\payvec(\bp(\scorevec))$ is also Lipschitz.
As a result, by Propositions 4.2 and 4.1 of \cite{Ben99}, we conclude that the recursion
\begin{equation}
\label{eq:XL}
\tag{XL}
\begin{aligned}
\scorevec(n+1)
	&= \scorevec(n) + \step_{n} \payvec(\bp(n)),
	\\
\bp(n+1)
	&= \frac{1}{1 + \sum_{s} e^{\score_{s}(n+1)}} (e^{\score_{1}(n+1)},\dotsc,e^{\score_{S}(n+1)}),
\end{aligned}
\end{equation}
is an \ac{APT} of the continuous-time dynamics \eqref{eq:XL-cont}.

Now, let $\eqset$ denote the set of Nash equilibria of $\game$, and assume ad absurdum that $\bp(n)$ remains a bounded distance away from $\eqset$.
Furthermore, fix some $\eqvec\in\eqset$ and let $D_{n} = \breg(\eq,\bp(n))$;
then, using \eqref{eq:dH}, we obtain the Taylor expansion:
\begin{flalign}
\label{eq:Dn}
D_{n+1}
	&= \breg(\eqvec,\bp(n+1))
	= \breg(\eqvec,\bp(\scorevec(n) + \step_{n} \payvec(\bp(n))))
	\notag\\
	&\leq D_{n}
	- \step_{n} \braket{\payvec(\bp(n))}{\eqvec - \bp(n)}
	+ \tfrac{1}{2} M \step_{n}^{2} \norm{\payvec(\bp(n))}^{2},
\end{flalign}
for some constant $M>0$ (that such a constant exists is a consequence of the fact that $\hess(h) \mgeq mI$ for some $m>0$ \cite{Nes09}).
Since $\bp(n)$ stays a bounded distance away from $\eqset$ (by assumption) and $\pot$ is concave, we will also have $\braket{\payvec(\bp(n))}{\eqvec - \bp(n)} \geq \delta$ for some $\delta>0$ and for all $n$.
Hence, telescoping \eqref{eq:Dn}, we get:
\begin{equation}
\label{eq:Dn1}
D_{n+1}
	\leq D_{0}
	- \delta \insum_{j=0}^{n} \step_{j}
	+ \frac{1}{2} M v^{2} \insum_{j=0}^{n} \step_{j}^{2},
\end{equation}
where we have set $v = \sup_{\bp\in\varsimplex} \norm{\payvec(\bp)}$.
Since $\sum_{j=1}^{\infty} \step_{j}^{2}\big/\sum_{j=1}^{\infty} \step_{j} \to 0$, this last inequality yields $\lim_{n\to\infty} D_{n} = -\infty$, a contradiction.
We thus conclude that $\bp(n)$ visits a compact neighborhood of $\eqset$ infinitely often, so our claim of convergence follows from \cite[Theorem 6.10]{Ben99}. 
\end{proof}

\subsection{The fast-fading case}
\label{app:proofs-fading}

Our goal in this appendix is to prove uniqueness of \ac{NE} in the ergodic game $\bar\game$ (Prop.~\ref{prop:potential-erg}) and the convergence of Algorithm \ref{alg:XL} in the presence of fast fading.

\begin{IEEEproof}[Proof of Proposition \ref{prop:potential-erg}]
That $\bar\pot$ is an exact potential for $\bar\game$ follows directly by inspection, as in the case of Proposition \ref{prop:potential}.
For the strict concavity of $\bar\pot$, let $\hess_{g}(\pot)$ denote the Hessian of the static potential function $\pot$ for a given realization of the channel gain coefficients $g$.
Then, with $\pot$ bounded and smooth over $\strat$, the dominated convergence theorem allows us to interchange differentiation and integration, so we obtain $\hess(\bar\pot) = \ex_{g}[\hess_{g}(\pot)]$.
Thus, for all $\bz\in\R^{KS}$, we will have:
\begin{equation}
\bz^{\dag} \cdot \hess(\bar\pot) \cdot \bz
	= \ex_{g} \big[ \bz^{\dag} \cdot \hess_{g}(\pot) \cdot \bz \big]
	\geq 0.
\end{equation}
From the proof of Theorem \ref{thm:equilibrium}, we know that $\bz^{\dag} \cdot \hess_{g}(\pot) \cdot \bz$ only if $\insum_{k} g_{k\sub} z_{k\sub} = 0$ for all $\sub\in\subs$;
however, since this is a measure zero event (recall that the law of $g$ is atom-free), we will have $\bz^{\dag} \cdot \hess_{g}(\pot) \cdot \bz > 0$ on a set of positive measure.
This shows that $\bz^{\dag} \cdot \hess(\bar\pot) \cdot \bz > 0$ for all $\bz\in\R^{KS}$, i.e., $\bar\pot$ is strictly concave.
We conclude that $\bar\game$ admits a unique equilibrium, as claimed.
\end{IEEEproof}

\begin{IEEEproof}[Proof of Theorem \ref{thm:conv-erg}]
The same reasoning as in the proof of Theorem \ref{thm:conv} shows that the iterates of Algorithm \ref{alg:XL} run with the players' instantaneous utilities calculated as in \eqref{eq:marginal-inst} comprise a stochastic approximation (\acl{APT}) of the mean dynamics:
\begin{equation}
\label{eq:XL-erg}
\begin{aligned}
\dot \scorevec_{k}
	&= \bar\payvec_{k}(\bp),
	\\
p_{ks}
	&= P_{k} \frac{\exp(y_{ks})}{1 + \insum_{s'\in\subs} \exp(y_{ks'})}.
\end{aligned}
\end{equation}
Again, by following the same steps as in the Proof of Theorem \ref{thm:conv}, we can show that the dynamics \eqref{eq:XL-erg} converge to the unique \acl{NE} of the ergodic game $\bar\game$;
as such, it suffices to show that any \ac{APT} of \eqref{eq:XL-erg} induced by Alg.~\ref{alg:XL} converges to equilibrium.

To that end, with notation as in \eqref{eq:Dn}, we readily obtain:
\begin{equation}
\label{eq:Dn-erg}
D_{n+1}
	= \breg(\eqvec,\bp(n+1))
	\leq D_{n}
	- \step_{n} \braket{\hat\payvec(n)}{\eqvec - \bp(n)}
	+ \tfrac{1}{2} M \step_{n}^{2} \norm{\hat\payvec(n)}^{2},
\end{equation}
where $\eqvec$ is the (unique) \ac{NE} of $\bar\game$ and $M>0$ is a positive constant.
Assume now that $\bp(n)$ remains a bounded distance away from $\eqvec$ (so $D_{n}$ is bounded away from zero),
and let $\xi_{n} = \braket{\hat\payvec(n) - \bar\payvec(\bp(n))}{\bp(n) - \eqvec}$.
Since $\bar\pot$ is (strictly) concave and $\bp(n)$ stays a bounded distance away from its maximum set, we will have $\braket{\bar\payvec(\bp(n))}{\eqvec - \bp(n)} \leq - m$ for some positive constant $m>0$.
Hence, telescoping \eqref{eq:Dn-erg} yields:
\begin{equation}
\label{eq:Dn-erg1}
D_{n+1}
	\leq D_{0}
	- t_{n} \left( m - \insum_{j=1}^{n} w_{j,n}\,\xi_{j} \right)
	+ \frac{1}{2} M \insum_{j=1}^{n} \step_{j}^{2} \norm{\hat \payvec(j)}^{2},
\end{equation}
where $t_{n} = \sum_{j=1}^{n} \step_{j}$ and $w_{j,n} = \step_{j}/t_{n}$.
By the strong law of large numbers for martingale differences \cite[Theorem 2.18]{HH80}, we will have $n^{-1} \sum_{j=1}^{n} \xi_{j} \to0$ (a.s.);
hence, with $\step_{n+1}/\step_{n}\leq1$, Hardy's weighted summability criterion \cite[p.~58]{Har49} applied to the weight sequence $w_{j,n} = \step_{j}/t_{n}$ yields $\sum_{j=1}^{n} w_{j,n}\, \xi_{j} \to 0$ (a.s.).
Finally, since $\step_{n}$ is square-summable and $\hat\payvec(n) - \bar\payvec(\bp(n))$ is a martingale difference with finite variance, it follows that $\sum_{n=1}^{\infty} \step_{n}^{2} \norm{\hat\payvec(n)}^{2} < \infty$ (a.s.) by Theorem 6 in \cite{Cho68}.

Combining all of the above, we obtain that the RHS of \eqref{eq:Dn1} tends to $-\infty$ (a.s.);
this contradicts the fact that $D_{n}\geq0$, so we conclude that $\bp(n)$ visits a compact neighborhood of $\eqvec$ infinitely often.
Since $\eqvec$ is a global attractor of \eqref{eq:XL-erg}, Theorem 6.10 in \cite{Ben99} shows that $\bp(n)$ converges to $\eqvec$ (a.s.).
\end{IEEEproof}


\balance
\bibliographystyle{IEEEtran}
\footnotesize
\bibliography{IEEEabrv,CognitivePricing}

\begin{thebibliography}{10}
\providecommand{\url}[1]{#1}
\csname url@samestyle\endcsname
\providecommand{\newblock}{\relax}
\providecommand{\bibinfo}[2]{#2}
\providecommand{\BIBentrySTDinterwordspacing}{\spaceskip=0pt\relax}
\providecommand{\BIBentryALTinterwordstretchfactor}{4}
\providecommand{\BIBentryALTinterwordspacing}{\spaceskip=\fontdimen2\font plus
\BIBentryALTinterwordstretchfactor\fontdimen3\font minus
  \fontdimen4\font\relax}
\providecommand{\BIBforeignlanguage}[2]{{%
\expandafter\ifx\csname l@#1\endcsname\relax
\typeout{** WARNING: IEEEtran.bst: No hyphenation pattern has been}%
\typeout{** loaded for the language `#1'. Using the pattern for}%
\typeout{** the default language instead.}%
\else
\language=\csname l@#1\endcsname
\fi
#2}}
\providecommand{\BIBdecl}{\relax}
\BIBdecl

\bibitem{ABC+14}
J.~G. Andrews, S.~Buzzi, W.~Choi, S.~Hanly, A.~Lozano, A.~C.~K. Soong, and
  J.~C. Zhang, ``What will {5G} be?'' \emph{{IEEE} J. Sel. Areas Commun.},
  vol.~32, no.~6, pp. 1065--1082, June 2014.

\bibitem{Qualcomm}
\BIBentryALTinterwordspacing
Qualcomm, ``The 1000x data challenge.'' [Online]. Available:
  \url{http://www.qualcomm.com/1000x}
\BIBentrySTDinterwordspacing

\bibitem{FCC02}
{FCC Spectrum Policy Task Force}, ``Report of the spectrum efficiency working
  group,'' Federal Communications Comission, Tech. Rep., November 2002.

\bibitem{GAO04}
K.~V. Schinasi, ``Spectrum management: Better knowledge needed to take
  advantage of technologies that may improve spectrum efficiency,'' United
  States General Accounting Office, Tech. Rep., May 2004.

\bibitem{MM99}
J.~{Mitola III} and G.~Q. {Maguire Jr.}, ``Cognitive radio: making software
  radios more personal,'' \emph{{IEEE} Personal Commun. Mag.}, vol.~6, no.~4,
  pp. 13--18, August 1999.

\bibitem{ZS07}
Q.~Zhao and B.~M. Sadler, ``A survey of dynamic spectrum access,'' \emph{{IEEE}
  Signal Process. Mag.}, vol.~24, no.~3, pp. 79--89, May 2007.

\bibitem{Hay05}
S.~Haykin, ``Cognitive radio: Brain-empowered wireless communications,''
  \emph{{IEEE} J. Sel. Areas Commun.}, vol.~23, no.~2, pp. 201--220, February
  2005.

\bibitem{GJMS09}
A.~Goldsmith, S.~A. Jafar, I.~Maric, and S.~Srinivasa, ``Breaking spectrum
  gridlock with cognitive radios: An information theoretic perspective,''
  \emph{Proc. {IEEE}}, vol.~97, no.~5, pp. 894--914, 2009.

\bibitem{SSSBN08}
O.~Simeone, I.~Stanojev, S.~Savazzi, and Y.~Bar-Ness, ``Spectrum leasing to
  cooperating ad hoc secondary networks,'' \emph{{IEEE} J. Sel. Areas Commun.},
  vol.~26, no.~1, pp. 203--213, January 2008.

\bibitem{ABSA02}
T.~Alpcan, T.~Ba{\c s}ar, R.~Srikant, and E.~Altman, ``{CDMA} uplink power
  control as a noncooperative game,'' \emph{Wireless Networks}, vol.~8, pp.
  659--670, 2002.

\bibitem{SMG02}
C.~U. Saraydar, N.~B. Mandayam, and D.~Goodman, ``Efficient power control via
  pricing in wireless data networks,'' \emph{{IEEE} Trans. Commun.}, vol.~50,
  no.~2, pp. 291--303, February 2002.

\bibitem{SBP06}
G.~Scutari, S.~Barbarossa, and D.~Palomar, ``Potential games: A framework for
  vector power control problems with coupled constraints,'' in \emph{Acoustics,
  Speech and Signal Processing, 2006. ICASSP 2006 Proceedings. 2006 IEEE
  International Conference on}, vol.~4, May 2006, pp. IV--IV.

\bibitem{DMMP14}
S.~D'Oro, P.~Mertikopoulos, A.~L. Moustakas, and S.~Palazzo, ``Adaptive
  transmit policies for cost-efficient power allocation in multi-carrier
  systems,'' in \emph{WiOpt '14: Proceedings of the 12th International
  Symposium and Workshops on Modeling and Optimization in Mobile, Ad Hoc, and
  Wireless Networks}, 2014.

\bibitem{MBML12}
P.~Mertikopoulos, E.~V. Belmega, A.~L. Moustakas, and S.~Lasaulce,
  ``Distributed learning policies for power allocation in multiple access
  channels,'' \emph{{IEEE} J. Sel. Areas Commun.}, vol.~30, no.~1, pp. 96--106,
  January 2012.

\bibitem{FCC03}
{FCC Spectrum Policy Task Force}, ``Establishment of interference temperature
  metric to quantify and manage interference and to expand available unlicensed
  operation in certain fixed mobile and satellite frequency bands,'' Federal
  Communications Comission, Tech. Rep. FCC NPRM ET Docket 03-237, 2003.

\bibitem{NH08}
D.~Niyato and E.~Hossain, ``Spectrum trading in cognitive radio networks: A
  market-equilibrium-based approach,'' \emph{{IEEE} Wireless Commun. Mag.},
  vol.~15, no.~6, pp. 71--80, December 2008.

\bibitem{WWL2010game}
B.~Wang, Y.~Wu, and K.~Liu, ``Game theory for cognitive radio networks: An
  overview,'' \emph{Computer networks}, vol.~54, no.~14, pp. 2537--2561, 2010.

\bibitem{PSPF10}
J.-S. Pang, G.~Scutari, D.~P. Palomar, and F.~Facchinei, ``Design of cognitive
  radio systems under temperature-interference constraints: A variational
  inequality approach,'' \emph{{IEEE} Trans. Signal Process.}, vol.~58, no.~6,
  pp. 3251--3271, June 2010.

\bibitem{YSSP13}
Y.~Yang, G.~Scutari, P.~Song, and D.~P. Palomar, ``Robust {MIMO} cognitive
  radio under interference temperature constraints,'' \emph{{IEEE} J. Sel.
  Areas Commun.}, vol.~31, no.~11, pp. 2465--2483, November 2013.

\bibitem{XZ14}
Y.~Xu and X.~Zhao, ``Robust power control for multiuser underlay cognitive
  radio networks under {QoS} constraints and interference temperature
  constraints,'' \emph{Wireless Personal Communications}, vol.~75, no.~4, pp.
  2383--2397, 2014.

\bibitem{MWG95}
A.~Mas-Colell, M.~D. Whinston, and J.~R. Green, \emph{Microeconomic
  Theory}.\hskip 1em plus 0.5em minus 0.4em\relax Oxford University Press,
  1995.

\bibitem{WCPW07}
W.~Wang, Y.~Cui, T.~Peng, and W.~Wang, ``Noncooperative power control game with
  exponential pricing for cognitive radio network,'' in \emph{VTC '07:
  Proceedings of the 2007 IEEE Vehicular Technology Conference}, April 2007,
  pp. 3125--3129.

\bibitem{WJH14}
Z.~Wang, L.~Jiang, and C.~He, ``Optimal price-based power control algorithm in
  cognitive radio networks,'' \emph{{IEEE} Trans. Wireless Commun.}, vol.~13,
  no.~11, pp. 5909--5920, November 2014.

\bibitem{MB14}
P.~Mertikopoulos and E.~V. Belmega, ``Transmit without regrets: online
  optimization in {MIMO}\textendash{OFDM} cognitive radio systems,''
  \emph{{IEEE} J. Sel. Areas Commun.}, vol.~32, no.~11, pp. 1987--1999,
  November 2014.

\bibitem{MS96}
D.~Monderer and L.~S. Shapley, ``Potential games,'' \emph{Games and Economic
  Behavior}, vol.~14, no.~1, pp. 124 -- 143, 1996.

\bibitem{Ben99}
M.~Bena{\"\i}m, ``Dynamics of stochastic approximation algorithms,'' in
  \emph{S{\'e}minaire de Probabilit{\'e}s XXXIII}, ser. Lecture Notes in
  Mathematics, J.~Az{\'e}ma, M.~{\'E}mery, M.~Ledoux, and M.~Yor, Eds.\hskip
  1em plus 0.5em minus 0.4em\relax Springer Berlin Heidelberg, 1999, vol. 1709,
  pp. 1--68.

\bibitem{GV97}
A.~J. Goldsmith and P.~P. Varaiya, ``Capacity of fading channels with channel
  side information,'' \emph{{IEEE} Trans. Inf. Theory}, vol.~43, no.~6, pp.
  1986--1992, 1997.

\bibitem{CCGHHKMMRX07}
G.~Calcev, D.~Chizhik, B.~Goransson, S.~Howard, H.~Huang, A.~Kogiantis,
  A.~Molisch, A.~Moustakas, D.~Reed, and H.~Xu, ``A wideband spatial channel
  model for system-wide simulations,'' \emph{IEEE Transactions on Vehicular
  Technology}, vol.~56, no.~2, pp. 389--403, March 2007.

\bibitem{DM92}
C.~Darken and J.~Moody, ``Note on learning rate schedules for stochastic
  optimization,'' DTIC Document, Tech. Rep., 1992.

\bibitem{MBML11}
P.~Mertikopoulos, E.~V. Belmega, A.~L. Moustakas, and S.~Lasaulce, ``Dynamic
  power allocation games in parallel multiple access channels,'' in
  \emph{ValueTools '11: Proceedings of the 5th International Conference on
  Performance Evaluation Methodologies and Tools}, 2011.

\bibitem{Bre67}
L.~M. Bregman, ``The relaxation method of finding the common point of convex
  sets and its application to the solution of problems in convex programming,''
  \emph{USSR Computational Mathematics and Mathematical Physics}, vol.~7,
  no.~3, pp. 200--217, 1967.

\bibitem{Kiw97}
K.~C. Kiwiel, ``Proximal minimization methods with generalized {Bregman}
  functions,'' \emph{SIAM Journal on Control and Optimization}, vol.~35, pp.
  1142--1168, 1997.

\bibitem{Nes09}
Y.~Nesterov, ``Primal-dual subgradient methods for convex problems,''
  \emph{Mathematical Programming}, vol. 120, no.~1, pp. 221--259, 2009.

\bibitem{HH80}
P.~Hall and C.~C. Heyde, \emph{Martingale Limit Theory and Its Application},
  ser. Probability and Mathematical Statistics.\hskip 1em plus 0.5em minus
  0.4em\relax New York: Academic Press, 1980.

\bibitem{Har49}
G.~H. Hardy, \emph{Divergent Series}.\hskip 1em plus 0.5em minus 0.4em\relax
  Oxford University Press, 1949.

\bibitem{Cho68}
Y.~S. Chow, ``Convergence of sums of squares of martingale differences,''
  \emph{The Annals of Mathematical Statistics}, vol.~39, no.~1, 1968.

\end{thebibliography}


%
%

\end{document}